\newtheorem{theorem}{Theorem}[section]
\newtheorem{proposition}[theorem]{Proposition}
\newtheorem{corollary}[theorem]{Corollary}
\newtheorem{lemma}[theorem]{Lemma}
\newtheorem{example}{Example}[section] 
\newcommand{\BbbR}{\mathbb{R}}
\newcommand{\cala}{{\cal A}}
\newcommand{\POA}{\ensuremath{\mathrm{PoA}}}
\newcommand{\PNE}{\ensuremath{\mathrm{NE}}}
\newcommand{\SPoA}{\ensuremath{\mathrm{SPoA}}}
\newcommand{\SPoS}{\ensuremath{\mathrm{SPoS}}}
\newcommand{\SPE}{\ensuremath{\mathrm{SPE}}}
\newcommand{\POS}{\ensuremath{\mathrm{PoS}}}
\newtheorem{defn}{Definition}[section]
\tikzset{
    vertice/.style = {draw=black, fill=white, circle, outer  sep = 1pt, inner sep=2pt, font=\footnotesize, minimum size=15pt, solid, thin},
    weight/.style = {fill=white, circle, inner sep=1pt, font=\scriptsize},
    large vertice/.style = {vertice, minimum size=20pt, font=\scriptsize, inner sep=0},
    leaf/.style={vertice, draw=none},
    emph/.style={edge from parent/.style={dashed, red, thick,draw}},
    norm/.style={edge from parent/.style={solid, black, thin, draw}}
}
\def\lbl#1{\ifnum#1=0 edge from parent node[draw=none,swap] {$b_1$}
\else      edge from parent node[draw=none]      {$b_2$}
\fi} 
\journal{Discrete Applied Mathematics}
\begin{document}

\begin{frontmatter}

\title{Tight Bounds for the Price of Anarchy and Stability in Sequential Transportation Games}

\author[mymainaddress]{Francisco J. M. da Silva
\corref{mycorrespondingauthor}}
\cortext[mycorrespondingauthor]{Corresponding author}
\ead{francisco.silva@ic.unicamp.br}

\author[mymainaddress]{Fl{\'a}vio K. Miyazawa}
\ead{fkm@ic.unicamp.br}

\author[mymainaddress]{Ieremies V. F. Romero}
\ead{i217938@dac.unicamp.br}

\author[mymainaddress]{Rafael C. S. Schouery}
\ead{rafael@ic.unicamp.br}

\address[mymainaddress]{Institute of Computing, University of Campinas, Campinas, 13083-852, Brazil}

\begin{abstract}
    In this paper, we analyze a transportation game first introduced by Fotakis, Gourv\`es, and Monnot in 2017, where players want to be transported to a common destination as quickly as possible and, in order to achieve this goal, they have to choose one of the available buses. We introduce a sequential version of this game and provide bounds for the Sequential Price of Stability and the Sequential Price of Anarchy in both metric and non-metric instances, considering three social cost functions: the total traveled distance by all buses, the maximum distance traveled by a bus, and the sum of the distances traveled by all players (a new social cost function that we introduce). Finally, we analyze the Price of Stability and the Price of Anarchy for this new function in simultaneous transportation games.
\end{abstract}

\begin{keyword}
Transportation \sep Sequential games \sep Subgame perfect equilibrium \sep Price of Anarchy \sep Price of Stability
\end{keyword}

\end{frontmatter}


\section{Introduction}
Transportation systems are widely used every day either to go to work/school or to travel around the city. For example, if a user is at a bus station and wants to go to a shopping mall, she would have some options among all buses that travel from her location to the mall. However, ``time is money", so it is expected that (selfish) users will compete for the buses that bring them to their destination as quickly as possible.

In this work, we aim to investigate the environment where players are competing against each other for the usage of shared resources, commonly called \textit{Resource Allocation Games}. In this setting, generally, the resources either are limited or are associated with a cost, so that resource sharing is necessary or desirable. More specifically, we study a family of resource allocation games called \textit{Transportation Games}. These games were recently introduced by Fotakis et~al.~\cite{fotakis2017selfish}, and they model situations motivated by ride-sharing systems like \textit{Uber, Dial-a-ride}, or \textit{Blablacar}. These systems are also important because of their direct impact on the environment in general, as they can induce less pollutant gas emission and reduce traffic congestion~\cite{furuhata2013ridesharing}. 

A (pure) Nash Equilibrium (NE) in a game is an outcome (the actions taken by each player) where there is no player who can unilaterally decrease her cost by changing her chosen action. 
An important metric to evaluate NE outcomes was introduced by Koutsoupias and Papadimitriou~\cite{KoutsoupiasP99}, which is called \textit{Price of Anarchy} (\POA{}). It is defined as being the largest ratio among all instances of a game between the worst social cost of an equilibrium and the optimal social cost of the game. Informally speaking, the \POA{} provides us the information on how much the social cost can increase due to the selfishness of the players. For example, when the \POA{} has its value far away from~$1$, it means that the players' selfish behavior can provoke a significant increase in the social cost.

Another relevant measure of the inefficiency of equilibria is the \textit{Price of Stability} (\POS{}). It was proposed by Anshelevich et al.~\cite{anshelevich2008price} and, unlike the Price of Anarchy, it evaluates the largest ratio among all instances of a game between the best social cost of an equilibrium and the optimal social cost of the game. As a consequence, we have that $\POA{} \ge \POS{} \ge 1$. 


Fotakis et al.~\cite{fotakis2017selfish} showed tight bounds on the $\POA{}$ and the $\POS{}$ of transportation games when considering two social cost functions: $D$ (the total distance traveled by all buses, representing fuel consumption) and $E$ (an egalitarian social cost function which measures the largest cost of a user\footnote{The largest cost of a user is, in fact, the largest distance traveled by a bus.}). Those values can be seen in Table~\ref{tab_Results}.

These results are related to the \textit{simultaneous} version of this game where all players announce their actions (choose their buses) simultaneously. However, one common criticism of simultaneous games is that we do not have any sense of the sequence of the actions performed by the players, which can be unrealistic in some settings. Therefore, we aim to investigate the effect of sequential decision making in transportation games following a line of recent publications, led by Paes Leme et al.~\cite{leme2012curse}. In this setting, instead of analyzing Nash equilibria arising from simultaneous strategic moves, we analyze \textit{Subgame Perfect Equilibrium} (\SPE{}), which roughly speaking is the outcomes where players act strategically and farsighted~\cite{shoham2008multiagent}. 

The \textit{Sequential Price of Anarchy} (\SPoA{}), introduced by Paes Lemes et al.~\cite{leme2012curse}, is a tool used to measure the lack of central authority in games where players choose their actions sequentially, following some arbitrary fixed order. It compares the quality of the worst \SPE{}, considering all possible orders of the players, and the quality of an optimal social outcome. In their work, Paes Lemes~et~al.~\cite{leme2012curse} showed that the \SPoA{} has better guarantees when compared with the \POA{} in some games. For example, while the \POA{} is unbounded for the \textit{Unrelated Machine Scheduling game}, they showed that its \SPoA{} is bounded. In fact, positive results showing that the \SPoA{} presents lower values than the \POA{} has been displayed for various games (\cite{leme2012curse, de2013decentralized, de2014sequential, hassin2015sequential}). However, that is not always the case (\cite{angelucci2013sequential, bilo2012some, correa2019inefficiency}), and we show that sequential transportation games fall in this category for all social cost functions analyzed in this paper.

\subsection{Main Contributions}\label{mainContri}

First, we define a new social cost function $U$ (from \emph{utilitarian}), which is the sum of the cost of the users, and we present the sequential version of the transportation games. In particular, we give bounds for the \SPoA{} and \SPoS{} for all social cost functions considered in this paper ($U$, $E$, and $D$). Most of these bounds are tight or asymptotically tight, others are asymptotically tight if the number of buses is constant. 

In short, in Section~\ref{chap_ineff_Seq}, we first show that the value of the \SPoS{} is unbounded for non-metric instances for all the three social cost functions and, then, we proceed to show the value of the \SPoS{} and \SPoA{} for metric instances. In Table~\ref{tab_Results}, we summarize the lower (LB) and upper (UB) bounds for the inefficiency of equilibria for the metric instances of transportation games, where the columns of this table represent all measures we use to analyze the inefficiency of equilibria: Price of Anarchy~(\POA{}), Price of Stability~(\POS{}), Sequential Price of Anarchy~(\SPoA{}), and Sequential Price of Stability~(\SPoS{}).

Finally, we extend the results of Fotakis et al.~\cite{fotakis2017selfish} analyzing the inefficiency of equilibria associated with function $U$, by giving bounds on the Price of Stability~(\POS{}) and Price of Anarchy~(\POA{}) for it (see Table~\ref{tab_Results} and Section~\ref{sec:U_function}).

\begin{table}
\centering
\resizebox{\textwidth}{!}{%
\begin{tabular}{@{\extracolsep{0pt}}lcccccccc}
\toprule   
{} & \multicolumn{2}{c}{\POA{}} & \multicolumn{2}{c}{\POS{}} & \multicolumn{2}{c}{\SPoA{}} & \multicolumn{2}{c}{\SPoS{}}\\
 \cmidrule{2-3} 
 \cmidrule{4-5} 
 \cmidrule{6-7}
 \cmidrule{8-9} 
 {Function} & LB & UB & LB & UB & LB & UB & LB & UB \\ 
\midrule
$D$  & $n$ & $n$ & $n$ & $n$ & $\boldsymbol{n}$ & $\boldsymbol{n}$ & $\boldsymbol{n}$ & $\boldsymbol{n}$ \\ 
 {} & \cite{fotakis2017selfish} & \cite{fotakis2017selfish} & \cite{fotakis2017selfish} & \cite{fotakis2017selfish} & Thm.~\ref{SPoA_D} & Thm.~\ref{SPoA_D} & Prop.~\ref{prop_D_SPoS} & Thm.~\ref{SPoA_D}\\ \cmidrule{2-9}
 $E$  & $2\lceil \frac{n}{m} \rceil-1$ & $2\lceil \frac{n}{m} \rceil-1$ & $\Omega(n/m)$  &  $O(n/m)$  & $\boldsymbol{2n-1}$ & $\boldsymbol{2n-1}$ & $\boldsymbol{\lfloor\frac{n}{m}\rfloor}$ & $\boldsymbol{2n-1}$ \\ 
  & \cite{fotakis2017selfish} & \cite{fotakis2017selfish} & \cite{fotakis2017selfish} & \cite{fotakis2017selfish} & Thm.~\ref{SPOA_E} & Thm.~\ref{SPOA_E} & Prop.~\ref{prop_E_SPoS} &Thm.~\ref{SPOA_E} \\ \cmidrule{2-9}
$U$  & $\boldsymbol{2\frac{n}{m}-1}$ & $\boldsymbol{2\frac{n}{m}+1}$ & $\boldsymbol{2\frac{n}{m}-1}$ & $\boldsymbol{2\frac{n}{m}+1}$ & $\boldsymbol{2n-1}$ & $\boldsymbol{2n-1}$ & $\boldsymbol{2\frac{n}{m}-1}$ & $\boldsymbol{2n-1}$ \\  
  & Prop.~\ref{prop_util_PoS} & Thm.~\ref{functionU_upper} &  Prop.~\ref{prop_util_PoS} & Thm.~\ref{functionU_upper} & Thm.~\ref{cor_SPoA_U} & Thm.~\ref{cor_SPoA_U} & Prop.~\ref{prop_util_SPoS} & Thm.~\ref{cor_SPoA_U} \\  
\bottomrule
\end{tabular}}
\caption[A table showing all the results known for the inefficiency of equilibrium of metric instances of the transportation games for three different social cost functions]{Summary of the bounds for the inefficiency of equilibria. The values in boldface are results of this paper.} 
\label{tab_Results}
\end{table}

\section{Model and Notation}\label{def_transpGames}
In this section, we say that a complete graph $G=(V,E)$ is metric if there is an associated distance function~${d:E\rightarrow \BbbR_+}$ that assigns metric values for the edges, i.e., for every~${x, y, w \in V}$, we have that~${d(x,w) \le d(x,y) + d(y,w)}$.

\subsection{Simultaneous Transportation Game}
We first focus on simultaneous transportation games, as introduced by Fotakis et al.~\cite{fotakis2017selfish}. An instance $\Gamma$ of a simultaneous \textit{transportation game}, is a tuple $(N, M, G)$, where ${N = \{1, \ldots, n\}}$ is a set of $n$ \emph{players}; $M = \{1, \ldots, m\}$ is a set of $m \ge 2$ \emph{buses}; and $G=(V,E)$ is a complete undirected graph with a source node $s$ and a destination node $t$, where $V = N \cup \{s,t\}$; and $d:E\rightarrow \BbbR_+$ is an associated distance function. Each player $i \in N$ is placed in its corresponding vertex in $G$, and they have as a goal to be transported from their location to~$t$, while minimizing their cost. 

An \emph{outcome} in $\Gamma$ is an assignment $\sigma: N \rightarrow M$ in which every player $i \in N$ chooses one bus~$j \in M$ (the bus that will pick her up), denoted by $\sigma_i$. We call $\mathcal{P}$ the set of all outcomes and, considering an outcome $\sigma \in \mathcal{P}$, player's $i$ cost under~$\sigma$, denoted by $c_i(\sigma)$, is the distance traveled by $\sigma_i$ from location $i$ to destination~$t$. 

An outcome $\sigma \in \mathcal{P}$ is a (Nash) \emph{equilibrium} if no player can decrease her cost by changing her action, while the action of every other player remains the same. That is, $\sigma$ is an equilibrium if, for every $i \in N$ and every $\sigma' \in \mathcal{P}$ such that $\sigma_i \neq \sigma_i'$ and $\sigma_j = \sigma_j'$ for all $j \in N\setminus \{i\}$, we have that $c_i(\sigma) \leq c_i(\sigma')$.

In order to determine the routes for the buses, we consider that each bus~${j \in M}$ has an algorithm $\cala_j$, which, given $V' \subseteq V$, calculates its route which starts on node~$s$, goes through all vertices of $V'$ and finishes its route on node $t$. We consider that, as in Fotakis et al.~\cite{fotakis2017selfish},  each algorithm~$\cala_j$, for $j \in M$, is based on a permutation $\pi_j:N \rightarrow N$, given as input, such that a bus~$j$ will only visit players that have chosen bus $j$, following the order given by its permutation. That is, bus~$j$ will do ``shortcuts'' in its permutation whenever it is possible. See the example below.

\begin{example}
Consider the metric instance depicted in Figure~\ref{fig:ex_stg}, where the cost of the edges not shown are the distance of a minimum path between any pair of nodes. In this instance we have $N=\{1, \ldots, 5\}$ as the set of players and $M=\{1,2\}$ as the set of available buses. Let~$\pi_j$, for~$j \in M$, be the identity permutation, i.e. $\pi_j = (1, 2, 3, 4, 5)$, where we can interpret it as bus~$j$ following the path $s\rightarrow 1\rightarrow 2\rightarrow 3\rightarrow 4\rightarrow 5 \rightarrow t$. Notice that player $4$ will always choose a different bus than the one chosen by player $5$ because if she travels together with player~$5$, then her cost would be $7$, but if she chooses a different bus than player~$5$, then her cost would be~$3$. Let us analyze outcome $\sigma = (1, 1, 1, 2, 1)$. Observe that under~$\sigma$, buses $1$ and $2$ are going to perform shortcuts in their routes. For example, bus $2$ will follow the path $s \rightarrow 4 \rightarrow t$. Here, we have the following costs: $c_1(\sigma) = 14$, $c_2(\sigma) = 8$, $c_3(\sigma) = 5$, $c_4(\sigma) = 3$, and $c_5(\sigma) = 3$. Under $\sigma$, just player $1$ is willing to deviate and does so. Now, with this modification, we have the outcome $\sigma'=(2,1,1,2,1)$, and the improved cost of player~$1$ is $c_1(\sigma')=4$. Since under outcome $\sigma'$ no one wants to do an unilateral deviation, this is an equilibrium.

\begin{figure}
\centering
\resizebox{12cm}{!}{%
\begin{tikzpicture}    
    \node[vertice] (t) at (4,0) {$t$};
    \node[vertice] (1) at (0, -2) {$1$};
    \node[vertice] (4) at (2, -2) {$4$};
    \node[vertice] (3) at (4, -2) {$3$};
    \node[vertice] (5) at (6, -2) {$5$};
    \node[vertice] (2) at (8, -2) {$2$};
    
    \foreach \x/\y/\w in {
        1/t/3,
        2/t/3,
        3/t/3,
        4/t/3,
        5/t/3,
        1/4/1,
        4/3/2,
        3/5/2,
        5/2/1}{
        \draw (\x) to node[weight] {\w} (\y);
    }
\end{tikzpicture}
}
\caption[Example of a transportation game instance]{Metric instance with five players and their distances.} 
\label{fig:ex_stg} 
\end{figure}
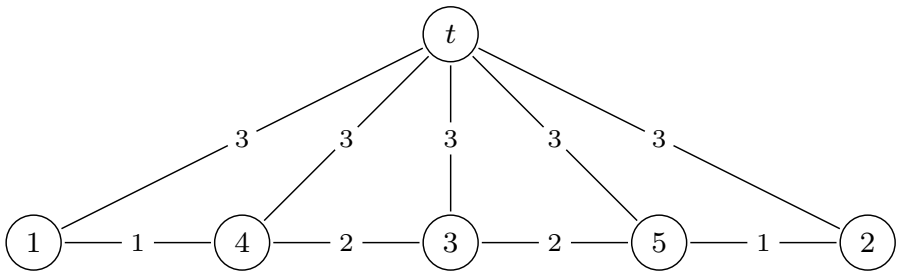
\end{example}

Given an outcome~$\sigma$, we denote the $i$-th player picked up by bus $j$ by $p_{i}^{j}$ and $|\{i: \sigma_i = j \}|$ by $n_j$.  Also, we abuse notation and consider $p_{n_j+1}^{j} = t$ for all bus~$j \in M$.

We use three different social cost functions in order to measure the inefficiency of equilibria, with two of them ($D$ and $E$) already defined and analyzed by Fotakis et al.~\cite{fotakis2017selfish}.
Since all three functions neglect the distance between $s$ and the first player picked up by a bus, we will ignore $s$ from now on.  

The first social cost function is described as \textit{Vehicle Kilometers Traveled}, which reflects the environmental impact of the game's outcome. We define
\begin{align}\label{equation:functionD}
D(\sigma)= \sum_{j=1}^{m} \sum_{i=1}^{n_j} d(p_{i}^{j}, p_{i+1}^{j}).
\end{align}
Indeed, $D(\sigma)$ represents the total distance traveled by the buses, except the distance from $s$ to the first player of each bus. 

The \textit{Egalitarian} social cost function is a classical social cost function in game theory literature which, in our context, it is the worst traveling distance of a player. It is defined as
\begin{align}\label{equation:functionE}
E(\sigma) = \max_{i \in N} c_i(\sigma).
\end{align}
Notice that this value is also the maximum distance traveled by a single bus (again, ignoring $s$).

In this paper, we introduce the analysis of another social cost function, called the \textit{Utilitarian} social cost function, which is also very common in algorithmic game-theory~\cite{roughgarden2009intrinsic, hoeksma2011price, cole2010coordination}. This function represents, in the context of transportation games, the sum of the distances travelled by all players, and it is defined as
\begin{align}\label{equation:functionU}
U(\sigma) = \sum_{i \in N} c_i(\sigma) = \sum_{j \in M} \sum_{i \in N \,\vert\, \sigma_i = j} c_i(\sigma).
\end{align}
Thus, $U(\sigma)/n$ can be seen as the average distance a player will travel to reach her final destination. 

\begin{defn}\label{defn:PoA} Given a transportation game $\Gamma$ and a social cost function ${f: \mathcal{P} \to \mathbb{R}}$, let \PNE{}($\Gamma$) be the set of Nash equilibria of $\Gamma$ and let $\sigma^* \in \mathcal{P}$ be an outcome that minimizes $f(\sigma^*)$. The \textit{Price of Anarchy} (\POA{}) of $\Gamma$ for function $f$ is defined as 
\begin{equation}
\POA{}(f, \Gamma) = \max_{\sigma \in \PNE{}(\Gamma)} \frac{f(\sigma)}{f(\sigma^{*})},
\end{equation}
while the \textit{Price of Stability} (\POS{}) of $\Gamma$ for function $f$ is defined as 
\begin{equation}
\POS{}(f, \Gamma) = \min_{\sigma \in \PNE{}(\Gamma)} \frac{f(\sigma)}{f(\sigma^{*})}.
\end{equation}
\end{defn}

The Price of Anarchy and the Price of Stability of a class~$\mathcal{G}$ of transportation games are defined by ${\POA{}(f, \mathcal{G}) = \sup_{\Gamma \in \mathcal{G}} \POA{}(f, \Gamma)}$ and ${\POS{}(f,\mathcal{G}) = \sup_{\Gamma \in \mathcal{G}} \POS{}(f, \Gamma)}$, respectively. We will use $\POA{}(f)$ and $\POS{}(f)$ when the class of transportation games $\mathcal{G}$ is clear from the context.

Fotakis et al.~\cite{fotakis2017selfish} showed that a NE always exists when all the buses have the same permutation or in metric instances with two buses, but they also proved that not all metric instances of this game possess a \PNE{}. Because of this, they analyzed the inefficiency of equilibria of only the instances of the game that do have a \PNE{}. We also make this assumption when analyzing the~$\POA{}(U)$ and the~$\POS{}(U)$ in Section~\ref{sec:U_function}.

\subsection{Sequential Transportation Games}\label{def_sec:seq_model}
In the sequential version of the transportation game, players still choose a bus $j \in M$ but, instead of announcing their actions simultaneously, they choose their actions following an arbitrary predefined order. For simplicity, we consider this order to be $(1,2,\ldots,n)$ as one can always change buses' permutation accordingly. Therefore, player $i$ has to choose an action~$\sigma_i$ knowing only the actions taken by the players $1,2,\ldots, i-1$. However, because we are considering a game with full information, when choosing an action, players will indeed take into consideration their successors' behavior, so they will be able to fully anticipate their actions considering that they will behave selfishly.

Because of the game's sequentiality, for each set of possible actions ${\sigma_{<i} = (\sigma_1, \sigma_2, \ldots, \sigma_{i-1})}$ chosen by the predecessors of player~$i$, player~$i$ needs to specify an action $\lambda_i(\sigma_{<i})$ to deal with all of those actions, that is, specify which bus should be chosen if player $1$ to $i-1$ chooses actions $\sigma_{<i}$. We call~$\lambda_i$ a \textit{strategy}, since it defines an action for every possible predecessors' choices. Also, we use $\lambda = (\lambda_1,\lambda_2, \ldots, \lambda_n)$ to refer to the strategies' choice of all players, which we call a \emph{strategy profile} and we denote the set of all strategy profiles by~$\mathcal{S}$. 

Finally, the \emph{outcome of} $\lambda$ is an outcome $\sigma = (\sigma_1, \sigma_2,\dots, \sigma_n)$ such that $\sigma_i = \lambda_i(\sigma_{<i})$ for all $i \in N$, where, again, ${\sigma_{<i} = (\sigma_1, \sigma_2, \ldots, \sigma_{i-1})}$. That is, the outcome of $\lambda$ is such that $\sigma_i$ is the action chosen by player $i$ (according to $\lambda$) when player $j$ chooses $\sigma_j$ for $1 \leq j < i$.

Following the definition given by Shoham and Leyton-Brown~\cite{shoham2008multiagent}, we say that a strategy profile $\lambda$ is a \textit{Subgame Perfect Equilibrium} (\SPE{}) if it induces a Nash Equilibrium in any subgame (the restriction of the game to some $\sigma_{<i}$). That is,~$\lambda$ is a \SPE{} if for all players $i$ and all~$\sigma_{<i}$, $i$ cannot decrease her cost by changing to another strategy different from $\lambda(\sigma_{<i})$ in the \textit{subgame} where~$\sigma_{<i}$ is fixed for her predecessors (players $1, 2, \ldots, i-1$) and $\lambda_{i+1},\lambda_{i+2},\ldots, \lambda_{n}$ are the strategies chosen by her successor (players $i+1, i+2, \ldots, n$).

Furthermore, we can express the sequential transportation game in its extensive form by considering a rooted $m$-ary tree, wherein each level $i$ ($1 \le i \le n)$, player~$i$ is responsible for taking decisions of all nodes in her level. In other words, she has to choose one of the $m$ buses knowing only the decisions of her~$i-1$ predecessors. On the leaves (terminal nodes) is the information about the cost of each player according to that outcome (which can be seen as a path of choices from the root to that leaf). Since we are dealing with a game with full information, it always has at least one \SPE{}, which can be computed by backward induction, using the well-known Zermelo's algorithm~\cite{zermelo1913anwendung}.
 
Regarding the inefficiency of equilibria, Paes Leme et al.~\cite{leme2012curse} introduced the notion of the Sequential Price of Anarchy (\SPoA{}) as a tool to measure the quality of \SPE{}s.

Consider $f$ as one of the previously defined social cost function ($U$, $D$ or $E$). Notice that $f: \mathcal{P} \to \mathbb{R}$ and, thus, it measures the social cost of an outcome of the sequential game. For some strategy profile $\lambda \in \mathcal{S}$, we will abuse notation and write $f(\lambda)$ to denote the social cost of the outcome of $\lambda$ according to $f$.

\begin{defn}\label{defn:SPoA} Given a sequential transportation game $\Gamma$ and a social function ${f: \mathcal{P} \to \mathbb{R}}$, let \SPE{}($\Gamma$) be the set of all Subgame Perfect Equilibria of $\Gamma$ and $\sigma^*$ be an outcome that minimizes $f(\sigma^*)$. The \textit{Sequential Price of Anarchy} (\SPoA{}) of $\Gamma$ for function $f$ is defined as
\begin{equation}
\SPoA{}(f, \Gamma) = \max_{\lambda \in \SPE{}(\Gamma)} \frac{f(\lambda)}{f(\sigma^{*})},
\end{equation}
while the \textit{Sequential Price of Stability} (\SPoS{}) of $\Gamma$ for function $f$ is defined as 
\begin{equation}
\SPoS{}(f, \Gamma) = \min_{\lambda \in \SPE{}(\Gamma)} \frac{f(\lambda)}{f(\sigma^{*})}.
\end{equation}
\end{defn}

The Sequential Price of Anarchy and the Sequential Price of Stability of a class of sequential transportation games $\mathcal{G}$ are defined by ${\SPoA{}(f,\mathcal{G}) = \sup_{\Gamma \in \mathcal{G}} \SPoA{}(f, \Gamma)}$ and ${\SPoS{}(f,\mathcal{G}) = \sup_{\Gamma \in \mathcal{G}} \SPoS{}(f, \Gamma)}$, respectively. 
Again, we will use $\SPoA{}(f)$ and $\SPoS{}(f)$ when the class of sequential transportation games $\mathcal{G}$ is clear from the context. Next, we show an example of an instance of a sequential transportation game.

\begin{example}
Consider Figures~\ref{fig:ex_spos1} and~\ref{fig:ex_spos2}. In this game, there are~${n=4}$ players and~${m=2}$ buses with ${\pi_{1} = \pi_{2} = (1,2,4,3)}$. In Figure~\ref{fig:ex_spos2}, we have the game depicted in its extensive form where it is also indicated the \SPE{}~${\lambda = (1, (1,2), (2,1,1,2),(2,1,2,1, 2,1,2,1))}$ with outcome $(1, 1, 2, 1)$, where actions of a player~$i$ is ordered from left to right considering the nodes of level~$i$. Also, the leaves represent the cost associated with each of players $1$, $2$, $3$, and~$4$, respectively. Under the Egalitarian social function, we can see that, for this instance,~${\SPoA{}(E) \ge 5/3}$ as an optimal outcome values $3$ (when players $1$ and $2$ are together in one bus and $3$ and~$4$ are together in the other bus).
\end{example}

    \begin{figure}[h]
    
    \centering
    
    \begin{tikzpicture}
    \node[vertice] (t) at (0,0) {$t$};
    \node[vertice] (1) at (2,0) {$1$};
    \node[vertice] (2) at (4,0) {$2$};
    \node[vertice] (3) at (-4,0) {$3$};
    \node[vertice] (4) at (-2,0) {$4$};

    \draw (3) to node[weight] {$1$} (4);
    \draw (4) to node[weight] {$1$} (t);
    \draw (1) to node[weight] {$1$} (t);
    \draw (2) to node[weight] {$1$} (1);
\end{tikzpicture}
    \caption{Metric instance with four players on a line.}
    \label{fig:ex_spos1}
\end{figure}
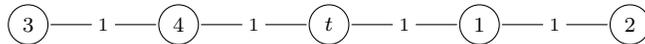

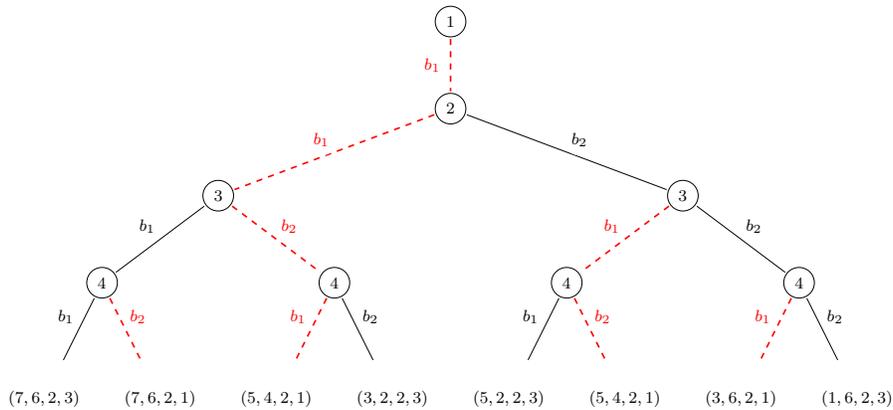
\begin{figure}[h]
    \centering
    \resizebox{12cm}{!}{%
    \begin{tikzpicture}[auto,
    every node/.style={vertice},    
    level 1/.style={sibling distance=8cm},
    level 2/.style={sibling distance=8cm},
    level 3/.style={sibling distance=4cm},
    level 4/.style={sibling distance=2cm,level distance=20mm}
    ]

    \node (1) {$1$} 
    child[emph] {node (2) {$2$}
        child[emph] {node (3) {$3$}
            child[norm] {node {$4$} 
                child {node[leaf] {$(7,6,2,3)$}   \lbl{0}}
                child[emph] {node[leaf] {$(7,6,2,1)$}   \lbl{1}}  
                \lbl{0}}
            child[emph] {node {$4$}   
                child[emph] {node[leaf] {$(5,4,2,1)$}   \lbl{0}}
                child[norm] {node[leaf] {$(3,2,2,3)$}   \lbl{1}} 
                \lbl{1}}
              \lbl{0}}
        child[norm] {node {$3$} 
            child[emph] {node {$4$}   
                child[norm] {node[leaf] {$(5,2,2,3)$}   \lbl{0}}
                child[emph] {node[leaf] {$(5,4,2,1)$}   \lbl{1}}
                \lbl{0}}
            child[norm] {node {$4$}   
                child[emph] {node[leaf] {$(3,6,2,1)$}   \lbl{0}}
                child {node[leaf] {$(1,6,2,3)$}   \lbl{1}}
                \lbl{1}}  
            \lbl{1}}
        \lbl{0}};
\end{tikzpicture}
}
    \caption{A tree representing a sequential transportation game. Because the permutations are equal, the tree is symmetric and we draw here just one half of it, where player~$1$ chooses the first bus. Dashed red edges indicate the action chosen at each node for \SPE{}~${\lambda = (1, (1,2), (2,1,1,2),(2,1,2,1, 2,1,2,1))}$. In the leaves, we have the cost of each player in the associated outcome.}
    \label{fig:ex_spos2}
\end{figure}

\section{Inefficiency of Equilibria in the Sequential Transportation Games}\label{chap_ineff_Seq}
Here, we analyze the Sequential Price of Anarchy and the Sequential Price of Stability for the sequential transportation games by considering the social cost functions $U$, $E$, and $D$. We begin by showing that the \SPoS{}, and thus the~\SPoA{}, is unbounded for all these social cost functions when dealing with non-metric instances. Then, we discuss our results about the value of the \SPoS{} and \SPoA{}  for metric cases of the sequential transportation games.
\subsection{Non-metric Instances}
As one could imagine, following the results of Fotakis et al.~\cite{fotakis2017selfish} stating that the \POS{} is unbounded for non-metric instances of the transportation game in its simultaneous version, we show that the \SPoS{} is also unbounded for social cost functions $U$, $E$, and $D$ when dealing with non-metric instances, even if all buses' permutations are equal.

\begin{figure}
\centering 
\begin{tikzpicture}
    \node[vertice] (t) at (0,0) {$t$};
    \node[vertice] (1) at (90:2) {$1$};
    \node[vertice] (2) at (210:2) {$2$};
    \node[vertice] (3) at (-30:2) {$3$};

    \draw (1) to node[weight] {$X$} (t);
    \draw (1) to node[weight] {$X$} (3);
    \draw (1) to node[weight] {$0$} (2);
    \draw (2) to node[weight] {$0$} (3);
    \draw (2) to node[weight] {$0$} (t);
    \draw (3) to node[weight] {$1$} (t);
\end{tikzpicture}
\caption[A metric instance of the sequential transportation game which is used to show a lower bound on the Sequential Price of Anarchy for social function $D$]{Non-metric instance with $n=3$ players and $m=2$ buses, where $X$ is a positive number.}
\label{fig:SPoS_Unb}
\end{figure}
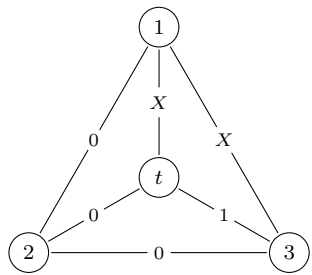
\begin{proposition}\label{SPoA_unb}
For non-metric sequential transportation games, the \SPoS{}$(U)$, \SPoS{}$(E)$, and \SPoS{}$(D)$ are unbounded even restricted to all buses having the same permutation.
\end{proposition}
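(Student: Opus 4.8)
The plan is to work directly with the three‑player instance of Figure~\ref{fig:SPoS_Unb}, fix both buses to share the single permutation $\pi=(1,3,2)$, and let the parameter $X$ grow without bound. The first thing I would record is that the instance is genuinely non‑metric: since $d(1,t)=X>0=d(1,2)+d(2,t)$, the triangle inequality fails, so this construction is consistent with the metric bounds proved elsewhere. After that, the whole argument reduces to two estimates: an $O(1)$ upper bound on the optimum and a lower bound of order $X$ that holds for \emph{every} \SPE{}, both obtained for $U$, $E$, and $D$ simultaneously.

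For the optimum I would simply exhibit the outcome $\sigma^\ast$ that puts players $1,2$ on one bus and player~$3$ on the other. Under $\pi=(1,3,2)$ the first bus runs $1\to 2\to t$, giving $c_1(\sigma^\ast)=c_2(\sigma^\ast)=0$, and the second runs $3\to t$, giving $c_3(\sigma^\ast)=1$. Hence $U(\sigma^\ast)=E(\sigma^\ast)=D(\sigma^\ast)=1$, so $\OPT\le 1$ for all three social cost functions at once.

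For the lower bound I would analyze the \SPE{} by backward induction on the decision order $1,2,3$. The key structural fact is that player~$3$ is the middle vertex of $\pi$, so whenever she shares a bus with player~$2$ the segment $3\to 2\to t$ costs her $d(3,2)+d(2,t)=0$, while otherwise (alone, or with player~$1$ only) she pays $d(3,t)=1$; thus in \emph{every} subgame player~$3$ strictly prefers to board whichever bus player~$2$ chose, so her \SPE{} action is forced. Player~$2$, being last in $\pi$, always pays $c_2=d(2,t)=0$ and is therefore indifferent, but I would show that both of her choices trap player~$1$: if she joins player~$1$, then player~$3$ follows and the bus runs $1\to 3\to 2\to t$, charging player~$1$ the edge $d(1,3)=X$; if she avoids player~$1$, then player~$3$ again follows player~$2$ and player~$1$ travels alone along $d(1,t)=X$. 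Since the two buses are interchangeable, player~$1$'s opening move is symmetric and cannot help, so $c_1=X$ in every \SPE{}, forcing $U=E=D=X$ on the resulting outcome.

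Combining the two estimates gives $\SPoS(f)\ge X/1=X$ for each $f\in\{U,E,D\}$, which is unbounded as $X\to\infty$; since $\SPoA\ge\SPoS$, the same conclusion holds for the \SPoA{}. The step I expect to demand the most care is the lower bound's treatment of the \emph{best} equilibrium: because $\SPoS$ minimizes over \SPE{}s, it does not suffice to display one bad equilibrium — I must rule out every good one, and this is precisely where the indifference of player~$2$ is dangerous, since a single favorable tie‑break on her part could in principle rescue player~$1$. The argument survives only because player~$3$'s preference is \emph{strict} in every subgame, which pins the continuation and forces player~$1$'s cost to $X$ under each of player~$2$'s best responses; verifying this strictness at all four of player~$3$'s decision nodes is the crux of the proof.
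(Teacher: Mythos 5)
Your proof is correct and takes essentially the same route as the paper: the identical instance of Figure~\ref{fig:SPoS_Unb} with common permutation $(1,3,2)$, the optimum of value $1$ placing players $1$ and $2$ together and player $3$ alone, and the backward-induction observation that player $3$ strictly prefers player $2$'s bus in every subgame, trapping player $1$ at cost $X$ in every \SPE{} so that $U=E=D=X$ and the ratio diverges. Your explicit check of strictness at all of player $3$'s decision nodes (and hence that player $2$'s indifference cannot rescue player $1$) is precisely the justification behind the paper's brief claim that in any \SPE{} player $3$ always follows player $2$.
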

\begin{proof}
Consider the instance $\Gamma$ showed in Figure~\ref{fig:SPoS_Unb} with $n=3$, $m=2$, and $\pi$ being equal to the permutation ${\pi_{b_1}=\pi_{b_2}=(1,3,2)}$. In Figure~\ref{fig:SPoS_Unb_Ext}, we can see the game in its extensive form where the leaves represent the cost of players $1$, $2$, and $3$ respectively. Observe that, in any \SPE{}, player $3$ will always choose the same bus chosen by player $2$, so both of them will get cost $0$. Then, player $1$ cannot escape from getting cost $X$.

Now, in any optimal outcome $\sigma^*$ for any social cost function $f \in \{U,E,D\}$, player $3$ is alone in one bus while player~$1$ and~$2$ are on the other bus, so $f(\sigma^*) = 1$ for every $f \in \{U,E,D\}$. Therefore, when $X \to\infty$, the \SPoS{}$(f, \Gamma)$ tends to $\infty$ and the result follows. 
\end{proof}

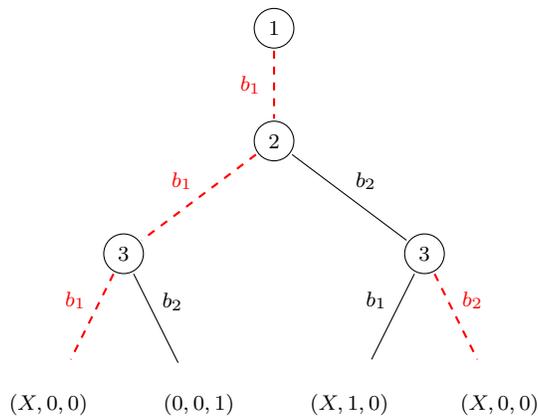
\begin{figure}
\centering 
\begin{tikzpicture}[auto,
    every node/.style={vertice},    
    level 1/.style={sibling distance=8cm},
    level 2/.style={sibling distance=4cm},
    level 3/.style={sibling distance=2cm,level distance=20mm}
    ]

    \node (1) {$1$} 
    child[emph] {node (2) {$2$}
        child[emph] {node (3) {$3$}
            child[emph] {node[leaf] {$(X, 0, 0)$}   \lbl{0}}
            child[norm] {node[leaf] {$(0, 0, 1)$}   \lbl{1}}  
            \lbl{0}}
        child[norm] {node {$3$} 
                child[norm] {node[leaf] {$(X, 1, 0)$}   \lbl{0}}
                child[emph] {node[leaf] {$(X, 0, 0)$}   \lbl{1}}
                \lbl{1}}
        \lbl{0}};
\end{tikzpicture}
\caption[A metric instance of the sequential transportation game which is used to show a lower bound on the Sequential Price of Anarchy for social function $D$]{Representation of the \SPE{} $(b_1, (b_1, b_1), (b_1,b_2,b_1,b_2))$. Because the permutations are equal, the tree is symmetric and we draw here just one half of it, where player~$1$ chooses the first bus. Dashed red edges indicate the action chosen at each node.}
\label{fig:SPoS_Unb_Ext}
\end{figure}

\subsection{Function D with Metric Instances}
Next, we show the value of the \SPoS{} and \SPoA{} for metric instances in relation with social function~$D$. It turns out that this value is $n$, which is equal to the value of the $\POA{}(D)$ and the $\POS{}(D)$ for its simultaneous version~\cite{fotakis2017selfish}. First, we show that $\SPoS{}(D) \ge n$ and, then, we proceed by showing that $\SPoA{}(D) = n$.
\begin{proposition}\label{prop_D_SPoS}
For metric transportation games with $n$ players, ${\SPoS{}(D) \ge n}$, even restricted to all buses having the same permutation.
\end{proposition}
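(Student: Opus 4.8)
The plan is to exhibit a family of metric instances, parameterized by a small $\epsilon>0$, for which $D(\lambda)/D(\sigma^*)$ tends to $n$ for \emph{every} \SPE{} $\lambda$; since $\SPoS{}(D)$ is a supremum over instances of $\min_{\lambda\in\SPE{}} D(\lambda)/D(\sigma^*)$, this yields $\SPoS{}(D)\ge n$. I would take $m=n$ buses and place the $n$ players in general position inside a ball of radius $\epsilon$ around a common point $P$ with $d(P,t)=\delta$, so that $d(i,t)=\delta\pm O(\epsilon)$ for every player $i$ and, crucially, every relevant triple satisfies the triangle inequality \emph{strictly} (no three of the $n+1$ points are collinear). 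All buses share the single permutation $\pi$ equal to the reverse of the decision order, i.e. $\pi=(n,n-1,\dots,1)$.

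The key mechanism is that, under this permutation, when a player $i$ joins a bus occupied only by lower-indexed players (her predecessors, who are all \emph{later} than $i$ in $\pi$), she is visited first and therefore pays the entire route $d(i,\cdot)+\dots+d(\cdot,t)$, which by the strict triangle inequality is strictly larger than the direct cost $d(i,t)$ she obtains on an empty bus. Because $m=n$, an empty bus is always available regardless of how the predecessors behaved, and no successor ever wants to join her (a successor has higher index, would be visited first on the shared bus, and would pay a strict detour). Hence at every decision node each player strictly prefers an empty bus, and backward induction forces the ``all-separate'' outcome in which every player travels alone on her own bus.

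I would then compute the two quantities. In the all-separate outcome each route is the single edge to $t$, so $D(\lambda)=\sum_{i} d(i,t)=n\delta\pm O(n\epsilon)$, and this is the value of every \SPE{}. For the optimum, placing all players on one bus gives a route of length $\delta+O(n\epsilon)$ (consecutive players are within $2\epsilon$ of one another), whereas any split uses at least two buses, each of cost at least $\delta-O(\epsilon)$; thus $D(\sigma^*)=\delta+O(n\epsilon)$. Letting $\epsilon\to0$ drives the ratio to $n$, and since all buses use the same permutation $\pi$, the bound holds even under that restriction.

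The main obstacle is the uniqueness claim: I must show that \emph{every} \SPE{}, not merely some, produces the all-separate outcome, since \SPoS{} takes the minimum over \SPE{}s. This requires verifying that the strict preference for an empty bus genuinely propagates through the backward induction---one argues simultaneously that an empty bus is always available (guaranteed by $m=n$) and that choosing it leaves the player alone (no successor ever joins her), so that no \SPE{} can contain any sharing. The remaining metric bookkeeping, namely placing the points in general position and bounding the route lengths as $\epsilon\to0$, is routine.
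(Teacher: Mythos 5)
Your proposal is correct and takes essentially the same approach as the paper: the paper's instance (Figure~\ref{fig:SPoA_D}) is precisely your clustered construction in degenerate form, with $m=n$ buses, $d(u,v)=\varepsilon$ between players, $d(u,t)=1$, and the common permutation $(n,n-1,\dots,1)$ reversing the decision order, and it runs the same forward induction showing that in \emph{every} \SPE{} each player strictly prefers a bus free of her predecessors, giving $D(\lambda)=n$ against $D(\sigma^*)=1+(n-1)\varepsilon$ and hence ratio tending to $n$. Your general-position $\epsilon$-ball merely substitutes for the paper's uniform inter-player distance $\varepsilon$ as the source of the strict preference.
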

\begin{proof}
For this lower bound, consider a game $(N,M,G)$ where ${|N|= |M| = n}$, and the graph is the one showed in Figure~\ref{fig:SPoA_D}. Let $\pi_j$, for~${j \in M}$, be the permutation $(n, n-1, \ldots, 2,1)$. It is possible to see that, as long as~${\varepsilon < \frac{1}{n-1}}$, there is an outcome $\sigma^*$ where all players are on the same bus, so ${D(\sigma^*) = 1 + (n-1)\varepsilon}$.

We will show, by induction on the player's index, that in the outcome of any \SPE{}~$\lambda$ of this instance, each bus is being used by a single player and, therefore, we get that the $D(\lambda) = n$.

For player $1$, her cost is the same no matter what is the choice of the other players. Now, suppose that all players $1, \ldots, i-1$ choose a different bus in the outcome of $\lambda$. For player $i$, any of the $n - i + 1$ buses without any of $i$'s predecessors cost $1$ and every other bus have cost $1 + \varepsilon$, no matter what is the choice of players $i + 1, \dots, n$. Thus, since $\lambda$ is a \SPE{}, $i$ chooses a bus different than the buses chosen by players $1, \ldots, i-1$. Therefore, in any \SPE{} $\lambda$, $D(\lambda)$ is~$n$, and hence, ${\SPoS{}(D) \ge \frac{n}{1+(n-1)\varepsilon}}$ for any $\varepsilon > 0$, from where the result follows.
\end{proof}

\begin{figure}
\centering 
\begin{tikzpicture}
    
    \node[large vertice] (1) at (162-18:1.5) {$1$};

    \node[large vertice] (2) at (234-18:1.5) {$2$};

    \node[large vertice, draw=none] (dot) at (306-18:1.5) {$\dots$};

    \node[large vertice] (n-1) at (18-18:1.5) {$n-1$};
    
    \node[large vertice] (n) at (90-18:1.5) {$n$};
    
    \node[large vertice] (t) at (7, 0) {$t$};

    \foreach \x/\y in {1/2, 1/dot, 1/n-1, 1/n, 2/dot, 2/n-1, 2/n, dot/n-1, dot/n, n-1/n} {
        \draw (\x) to node[weight] {$\varepsilon$} (\y);
    }

    \draw (1) to[bend left=45] node[weight] {$1$} (t);
    \draw (2) to[bend right=45] node[weight] {$1$} (t);
    \draw (dot) to node[weight] {$1$} (t);
    \draw (n-1) to node[weight] {$1$} (t);
    \draw (n) to node[weight] {$1$} (t);

\end{tikzpicture}
\caption[A metric instance of the sequential transportation game which is used to show a lower bound on the Sequential Price of Anarchy for social function $D$]{Graph $G$ where $d(s,u) = d(u,t)=1$ for all $u \in N$ and $d(u,v)=\varepsilon$ for all $u,v \in N$.}
\label{fig:SPoA_D}
\end{figure}
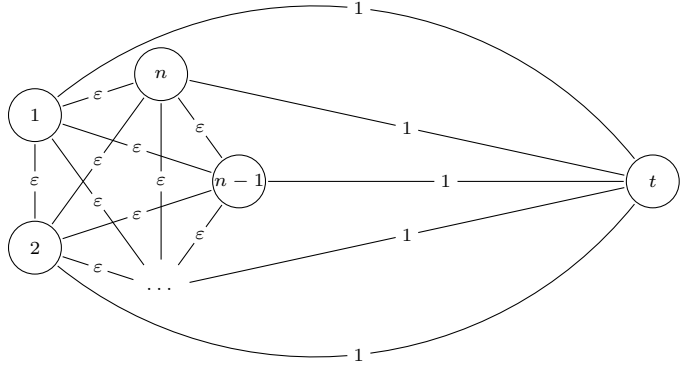

\begin{corollary}\label{SPoA_D}
    For metric sequential transportation games with $n$ players, $\SPoS{}(D)=\SPoA{}(D) = n$.
\end{corollary}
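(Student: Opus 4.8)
The plan is to combine the lower bound already in hand with a matching upper bound. From Proposition~\ref{prop_D_SPoS} we have $\SPoS{}(D) \ge n$, and since every instance admits at least one \SPE{} (by backward induction) and the maximum over a nonempty set dominates its minimum, $\SPoA{}(D) \ge \SPoS{}(D) \ge n$. Hence it remains only to prove the upper bound $\SPoA{}(D) \le n$, i.e.\ that $D(\sigma) \le n \cdot D(\sigma^*)$ for the outcome $\sigma$ of every \SPE{} $\lambda$ of every metric instance.

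The key point I would isolate first is that this upper bound needs no equilibrium property at all: it holds for \emph{every} outcome, hence in particular for \SPE{} outcomes. Reading off Equation~(\ref{equation:functionD}), the quantity $D(\sigma)$ is a sum of exactly $\sum_{j=1}^m n_j = n$ edge lengths $d(p_i^j, p_{i+1}^j)$, one per player (the last edge of each nonempty bus reaching $p_{n_j+1}^j = t$). Every such edge joins two vertices of $N \cup \{t\}$, so it suffices to prove the single metric lemma: for all $u, v \in N \cup \{t\}$ one has $d(u,v) \le D(\sigma^*)$. Summing this over the $n$ edges then yields $D(\sigma) \le n \cdot D(\sigma^*)$ immediately.

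To prove the lemma I would distinguish cases according to how the optimal outcome $\sigma^*$ places the endpoints. If one endpoint is $t$, then $d(u,t) \le \max_{i} d(i,t) \le D(\sigma^*)$, since the bus carrying the farthest player already travels at least $\max_i d(i,t)$. For two players $u,v$: if $\sigma^*$ assigns them to the same bus, the portion of that bus's route between $u$ and $v$ has length at least $d(u,v)$ by the triangle inequality, so $D(\sigma^*) \ge d(u,v)$; if $\sigma^*$ assigns them to distinct buses, those two buses contribute disjoint summands to $D(\sigma^*)$ of sizes at least $d(u,t)$ and $d(v,t)$, whence $D(\sigma^*) \ge d(u,t) + d(v,t) \ge d(u,v)$, again by the triangle inequality through $t$. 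In every case $d(u,v) \le D(\sigma^*)$, which is exactly what metricity buys us.

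The main obstacle is really just the verification of this lemma, and specifically being careful in the ``different buses'' case that the two bus-distances are genuinely separate summands of $D(\sigma^*)$, and in the ``same bus'' case that the shortcuts performed by $\cala_j$ do not shorten the route below $d(u,v)$ (both reduce to one application of the triangle inequality). Once the lemma is in place, combining $\SPoA{}(D) \le n$ with the lower bounds gives $\SPoS{}(D) = \SPoA{}(D) = n$, and I would close by remarking that the argument in fact bounds $D(\sigma)/D(\sigma^*)$ for arbitrary outcomes, matching the simultaneous value of Fotakis et al.\ and thereby showing that sequentiality yields no improvement for the social cost $D$.
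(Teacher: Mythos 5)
Your proposal is correct and follows essentially the same route as the paper: the lower bound is taken from Proposition~\ref{prop_D_SPoS}, and the upper bound is the equilibrium-free fact that $D(\sigma) \le n\,D(\sigma^*)$ for \emph{every} outcome $\sigma$, which the paper simply cites from Fotakis et al.~\cite{fotakis2017selfish}. The only difference is that you re-derive that cited inequality from scratch --- decomposing $D(\sigma)$ into its $n$ per-player edge terms and proving $d(u,v) \le D(\sigma^*)$ for all $u,v \in N \cup \{t\}$ via the three cases (endpoint $t$; same bus; different buses, through $t$) --- and that derivation is sound, so your argument is a valid self-contained version of the paper's proof.
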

\begin{proof}
For the upper bound, we have from Fotakis et al.~\cite{fotakis2017selfish} that $D(\sigma) \le nD(\sigma^*)$ for any outcome $\sigma$ and optimal outcome $\sigma^*$ and, thus, ${\SPoS{}(D) \le \SPoA{}(D) \le n}$. Now, we get the lower bound from Proposition~\ref{prop_D_SPoS}, and the proof is done.
\end{proof}

\subsection{Function E with Metric Instances}
We begin by showing a lower bound on the $\SPoS{}(E)$, and for this, we show in Example~\ref{graphGroup} a family of instances proposed by Fotakis et al.~\cite{fotakis2017selfish} for proving that ${\POS{}(D) \ge n}$.

\begin{example}[Fotakis et al.~\cite{fotakis2017selfish}]\label{graphGroup}
We construct the following metric instance considering that $k$ is a positive integer. The set of players is $N = L \cup R$ where~${L = \{l_{i,j}\colon  1 \le i \le k, 1 \le j \le m\}}$ (the left players) and ${R = \{r_{i,j}\colon  1 \le i \le k, 1 \le j \le m\}}$ (the right players). Therefore, we have ${n=2km}$ players. Next, we decompose $L$ and~$R$ into $k$ levels, where ${L_i=\{l_{i,j}\colon 1 \le j \le m\}}$ and ${R_i=\{r_{i,j}\colon  1 \le j \le m \}}$. Also, consider graph~$G$ as depicted in Figure~\ref{fig:graphGroupInstance}.

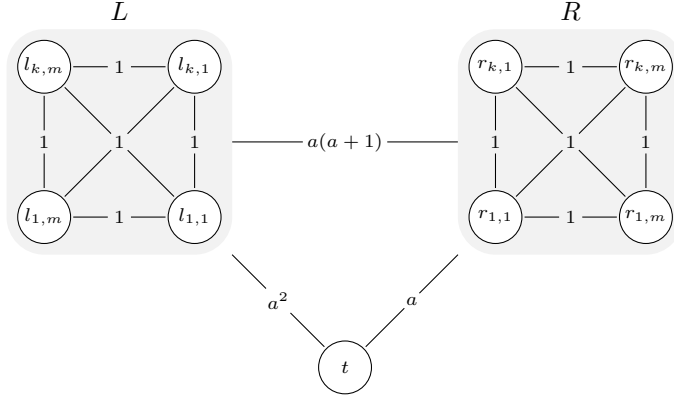
\begin{figure}
\centering
\begin{tikzpicture}

\tikzset{mysquare/.style={large vertice, rectangle, minimum width=3cm, minimum height=3cm, outer sep=0, font=\normalsize, rounded corners=0.5cm, fill=gray!10, draw=none},
gweight/.style={weight, fill=gray!10}
}

\node[mysquare, label=90:{$L$}] (L) at (-3, 3) {};
\node[mysquare, label=90:{$R$}] (R) at (3, 3) {};
\node[large vertice] (t) at (0,0) {$t$};
\draw (t) to node[weight] {$a^2$} (L);
\draw (t) to node[weight] {$a$} (R);
\draw (L) to node[weight] {$a(a + 1)$} (R);

\node[large vertice] (l11) at (-4, 4) {$l_{k,m}$};
\node[large vertice] (l12) at (-2, 4) {$l_{k,1}$};
\node[large vertice] (l21) at (-4, 2) {$l_{1,m}$};
\node[large vertice] (l22) at (-2, 2) {$l_{1,1}$};

\draw (l11) to node[gweight] {$1$} (l12);
\draw (l11) to node[gweight] {$1$} (l21);
\draw (l11) to node[gweight] {$1$} (l22);
\draw (l21) to node[gweight] {$1$} (l12);
\draw (l21) to node[gweight] {$1$} (l22);
\draw (l12) to node[gweight] {$1$} (l22);

\node[large vertice] (r11) at (4, 4) {$r_{k,m}$};
\node[large vertice] (r12) at (2, 4) {$r_{k,1}$};
\node[large vertice] (r21) at (4, 2) {$r_{1,m}$};
\node[large vertice] (r22) at (2, 2) {$r_{1,1}$};

\draw (r11) to node[gweight] {$1$} (r12);
\draw (r11) to node[gweight] {$1$} (r21);
\draw (r11) to node[gweight] {$1$} (r22);
\draw (r21) to node[gweight] {$1$} (r12);
\draw (r21) to node[gweight] {$1$} (r22);
\draw (r12) to node[gweight] {$1$} (r22);

\end{tikzpicture}

\caption[A metric instance of the transportation game which is used to give an upper bound on the Price of Stability of social cost functions $D$ and $E$]{Graph $G$ with the following distances (where $a$ is some positive number): $d(u,v) = 1$ if $u,v \in L$ or $u,v \in R$; $d(v,t) = a$ if $v \in R$; $d(v,t) = a^2$ if $v \in L$; and ${d(u,v) = a(a+1)}$ if $u \in L$ and $v \in R$. }
\label{fig:graphGroupInstance}
\end{figure}

The buses' permutations are all equal to a permutation $\pi$ defined as follows: 
every player from level $i$ appears after every player from level $j$ with $j > i$;
every player from $L_i$ appears after every player from $R_i$ for every level $i$;
and every player $l_{i,j} \in L_i$ (resp. $r_{i,j} \in R_i$) appears after every player $l_{i, k}$ (resp. $r_{i,k} \in R_i$) with~$k > j$.
\end{example}

\begin{proposition}\label{prop_E_SPoS}
For metric transportation games with $n$ players and $m$ buses, ${\SPoS{}(E) \ge \lfloor\frac{n}{m}}\rfloor$, even restricted to all buses having the same permutation.
\end{proposition}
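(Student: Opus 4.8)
The plan is to instantiate the family of Example~\ref{graphGroup} with the free parameter $a$ taken large, recording that it has $n=2km$ players so that $\lfloor n/m\rfloor = 2k$, and then to exhibit an asymptotic gap of $2k$ between the cost of \emph{every} \SPE{} and an optimal outcome for $E$ (recall that \SPoS{} is a minimum over \SPE{}s, so a lower bound requires controlling all of them).

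First I would pin down the optimum. Since the graph is metric, any bus that picks up a left player travels at least $d(l,t)=a^2$ afterwards, so $E(\sigma^*)\ge a^2$; and because deleting players from a bus never lengthens its (shortcut) route, a bus carrying $q$ left players has route at least $(q-1)+a^2$. For a matching upper bound I would use a ``pure-bus'' assignment: dedicate $m-1$ buses to the left players, spreading the $km$ of them as evenly as possible (at most $\lceil km/(m-1)\rceil$ per bus), and the remaining bus to all right players. No such bus incurs a cross edge $d(l,r)=a(a+1)$, so the left buses have route $\lceil km/(m-1)\rceil-1+a^2$ and the right bus route $km-1+a$; for $a$ large the left buses dominate and $E(\sigma^*)\le a^2+\lceil km/(m-1)\rceil-1=a^2+\Theta(k)$.

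The heart of the argument is the lower bound on every \SPE{}, and it would rest on the two incentives encoded by $\pi$. A left player $l_{i,j}$ attains her minimum possible cost $a^2$ exactly when she is the last player on her bus (no later-in-$\pi$ player shares it), whereas a right player $r_{i,j}$ suffers a full cross edge as soon as any left player from a level $i'\le i$ trails her on the same bus. Reasoning by backward induction on the extensive-form tree (equivalently, farsightedly from $L_1$ upward through the levels), I would show that these incentives make the clean pure-bus optimum unstable in every subgame — a left sharing a bus with a later left always prefers to jump onto a bus where she becomes the terminal stop — and that they instead force the alternating column configuration, in which each bus carries one right and one left of each level. Such a route consists of $2k-1$ cross edges followed by a terminal $a^2$, i.e.
\[
(2k-1)\,a(a+1)+a^2 = 2k\,a^2+(2k-1)a,
\]
so $E(\lambda)\ge 2k\,a^2+(2k-1)a$ for every \SPE{} $\lambda$.

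Combining the two estimates gives $\dfrac{E(\lambda)}{E(\sigma^*)}\ge \dfrac{2k\,a^2+(2k-1)a}{a^2+\Theta(k)}$, which tends to $2k=\lfloor n/m\rfloor$ as $a\to\infty$, yielding $\SPoS{}(E)\ge\lfloor n/m\rfloor$. I expect the genuine difficulty to lie entirely in the \SPE{} step: one must verify subgame by subgame that no farsighted deviation can break the alternation and leave some bus pure, so that the inefficient configuration is the outcome not of a single carefully chosen equilibrium but of every subgame perfect equilibrium. The optimum computation and the final ratio are routine metric and triangle-inequality estimates by comparison.
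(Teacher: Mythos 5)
Your setup is exactly the paper's: you use the family of Example~\ref{graphGroup} with $n=2km$, the alternating one-left-one-right-per-level configuration as the forced \SPE{} outcome, and the computation $E(\lambda)=(2k-1)a(a+1)+a^2$ versus $E(\sigma^*)=a^2+\Theta(k)$, giving ratio $\to 2k$ as $a\to\infty$ (your pure-bus bound on the optimum is a slightly different but equally valid estimate than the paper's $a^2+2k-1$). However, the proposal has a genuine gap at precisely the point you flag as ``the heart of the argument'': the claim that \emph{every} \SPE{} induces the alternating configuration is announced (``I would show\dots'') but never proven, and your stated plan --- verifying ``subgame by subgame that no farsighted deviation can break the alternation'' --- misreads the mechanism. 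The instance is engineered so that the sequential move order is the \emph{reverse} of the pickup order $\pi$ (player $1$ is $l_{1,1}$, the last pickup; player $n$ is $r_{k,m}$, the first). Consequently every player's cost depends \emph{only} on her predecessors' choices and is completely independent of her successors' strategies; no farsightedness enters at all, and in any \SPE{} each player must simply make a myopically cheapest choice given the current assignment. The paper then proves by forward induction the invariant that, among players who have already moved, each bus holds at most one player per group per level: when $l_{i,j}$ (resp.\ $r_{i,j}$) moves, every bus already carries exactly one player of each group at each level below $i$, and joining a bus that additionally holds a same-group same-level player $l_{i,j'}$, $j'<j$, costs exactly one unit more (the extra unit edge), so she strictly prefers a ``clean'' bus and the invariant is preserved regardless of how ties among clean buses are broken. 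Without this reverse-order observation and the one-unit strict-preference invariant, your induction is a placeholder, not a proof.

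Two smaller inaccuracies in your incentive analysis are worth correcting. First, your claim that a left player ``always prefers to jump onto a bus where she becomes the terminal stop'' is vacuous for levels $i>1$: once the invariant holds, \emph{every} bus already carries $2(i-1)$ lower-level players, all of whom come after her in $\pi$, so she can never be terminal; her actual incentive is only to avoid the one extra unit edge from a same-level left companion, and her equilibrium cost is roughly $2(i-1)a(a+1)+a^2$, far above $a^2$. Second, instability of the pure-bus optimum is neither needed nor sufficient for an \SPoS{} lower bound --- what is needed (and what the invariant delivers) is that all \SPE{} outcomes coincide, up to bus relabeling, with the expensive alternating configuration. Finally, like the paper you only treat $n=2km$; the paper at least remarks that the case $m\nmid n$ is handled by padding with players located at $t$ placed at the end of $\pi$, a step your write-up omits entirely.
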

\begin{proof}
To show this lower bound, we will use the instances described in Example~\ref{graphGroup}. We consider that the players are labeled such that the order $\pi$ presented in Example~\ref{graphGroup} is $(n, n-1, \ldots, 1)$ (e.g., player $1$ is $l_{1,1}$ and player $n$ is~$r_{k, m}$).

It is possible to see that there is an outcome $\sigma^*$ where all buses are being used in the following way. If $m$ is even, each bus picks up $2k$ players from only one group ($L$ or $R$), two per level. Hence, $E(\sigma^*) = a^2 + 2k - 1$. If $m$ is odd, we use one bus to pick up one player from every level of $L$, one bus to pick up three players from every level of $R$, and the other buses described in $m$ even case. Thus, considering that $a + 3k - 1 < a^2 + 2k - 1$ (which is true for sufficient large values of $a$), we obtain, again, that~$E(\sigma^*) = a^2 + 2k - 1$.

Now, we will show that in any outcome of a $\SPE{}$ $\lambda$, each bus contains exactly one player of each level of $L$ and one player of each level of $R$. 

Notice that Player~$1$~($l_{1,1}$) can choose any bus as she comes last in~$\pi$ and obtain the same cost for any choice. Now, for player $p$, suppose that every player $j < p$ chooses a bus such that, every bus has at most one player of each group ($L$ or~$R$) and each level when considering exclusively players $1, \ldots, p - 1$. Observe that a player $p$ cares only about the choices of her predecessors, i.e. players $1,2, \ldots, p - 1$, since they are the only ones to come after her in $\pi$. 

Suppose player $p$ is player $l_{i,j}$. Notice that, in this case, every bus picks up one player of each group and each level up to level $i - 1$, but exactly $j - 1$ of them also picks up an additional player on $l_{i, j'}$ with $j' < j$. Thus, $l_{i,j}$ chooses a bus with exactly one player of each group and each level up to level $i - 1$ and no player $l_{i, j'}$ with $j' < j$ since its cheaper (by one unit). A similar argument can be done when $p$ is player $r_{i,j}$. 

Therefore, we have that $E(\lambda) = (2k-1)(a^2+a) + a^2$ and, when $a \to \infty$, $\frac{E(\lambda)}{E(\lambda^*)}$ tends to $2k = \frac{n}{m}$. Notice that a similar proof can be made for the case where $m$ does not divide $n$ by inserting additional players located at $t$ and at the end of~$\pi$, from where the results follows. 
\end{proof}

Next, in contrast with the results on the value of \POA{} related to function $E$ presented by Fotakis et al.~\cite{fotakis2017selfish} ($\POA{}(E) = 2\lceil \frac{n}{m} \rceil - 1$ for $n > m$ and $\POA{}(E) = 1$ if $n \le m$), the next theorem shows that the value of the \SPoA{} is worse than in its simultaneous version for function $E$ even when~${n=m}$.

\begin{theorem}\label{SPOA_E}
    For metric transportation games with $n$ players and $n$ buses, ${\SPoA{}(E) = 2n-1}$.
\end{theorem}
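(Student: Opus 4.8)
The plan is to prove the two inequalities $\SPoA{}(E)\le 2n-1$ and $\SPoA{}(E)\ge 2n-1$ separately, and the instructive point is that the upper bound needs no equilibrium reasoning at all, while the lower bound is entirely about exhibiting a genuine (not merely plausible) bad \SPE{}.

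For the upper bound I would show the stronger statement that $E(\sigma)\le(2n-1)\,E(\sigma^*)$ for \emph{every} outcome $\sigma$. Set $R=\max_{i\in N}d(i,t)$. Since on its bus every player travels a path from her location to $t$, the triangle inequality gives $c_i(\sigma)\ge d(i,t)$, and hence $E(\sigma^*)\ge R$ for the optimal outcome as well. Conversely, fix any $\sigma$ and let $p_1,\dots,p_k$ (with $k\le n$) be the players of the bus attaining $E(\sigma)$, in the order visited; then the maximum is realized by the first player, so $E(\sigma)=\sum_{i=1}^{k-1}d(p_i,p_{i+1})+d(p_k,t)$. Bounding each inter-player edge by $d(p_i,p_{i+1})\le d(p_i,t)+d(t,p_{i+1})\le 2R$ and the final edge by $d(p_k,t)\le R$ yields $E(\sigma)\le(2k-1)R\le(2n-1)R\le(2n-1)E(\sigma^*)$. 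Maximizing over \SPE{} gives $\SPoA{}(E)\le 2n-1$.

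For the lower bound I would produce a single instance together with a bad \SPE{} meeting the bound exactly. Take the star (tree) metric on the $n$ players and $t$: set $d(i,t)=1$ for every player and $d(i,j)=2$ for every pair of players (one checks this is a valid metric, with the player--player triangle inequalities tight through $t$), and let all $n$ buses share the permutation $\pi=(1,2,\dots,n)$, so a bus always visits lower-indexed players first. Because $m=n$, the optimum places each player alone, giving $E(\sigma^*)=1$. The candidate equilibrium is the \emph{chasing} profile: player~$1$ takes bus~$1$, and each player $k\ge 2$ joins whichever bus currently contains player $k-1$. Its outcome puts everyone on bus~$1$ along the route $1\to 2\to\cdots\to n\to t$, charging player~$1$ the cost $(n-1)\cdot 2+1=2n-1$, so that $E(\lambda)/E(\sigma^*)=2n-1$.

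The crux, and the step I expect to be the main obstacle, is verifying that the chasing profile is genuinely subgame perfect rather than just a convenient outcome; I would argue by backward induction on the choosing index. The key observation is that, under the successors' chasing strategies, whenever player $k$ selects any bus $b$ in any subgame, all of players $k+1,\dots,n$ end up on $b$ too, while her already-visited predecessors on $b$ (indices $<k$) never contribute to her cost. Hence player $k$'s cost equals $\sum_{i=k}^{n-1}d(i,i+1)+d(n,t)=2(n-k)+1$, a value independent of $b$ and of the history, so \emph{every} player is indifferent among all buses in every subgame and ``join the bus of player $k-1$'' is a legitimate best response everywhere. This indifference (anchored by player~$n$, who is always visited last and so always pays exactly $1$) is precisely what subgame perfection demands, and it is also what lets the instance hit $2n-1$ exactly rather than only in the limit; the care required is to confirm the argument holds in all subgames, including off the equilibrium path. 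Combining the two bounds gives $\SPoA{}(E)=2n-1$.
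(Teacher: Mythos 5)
Your proposal is correct and takes essentially the same approach as the paper: the upper bound via the triangle inequality through $t$ (the paper invokes the bounds $d(u,v)\le 2E(\sigma^*)$ and $d(u,t)\le E(\sigma^*)$ from Fotakis et al., which you rederive directly via $R=\max_{i}d(i,t)$), and the lower bound via the identical instance ($d(i,t)=1$, $d(i,j)=2$, all buses with the identity permutation) together with the same chasing \SPE{}, justified by the same backward-induction indifference argument anchored at player $n$. Your explicit verification that the indifference holds in all subgames, including off the equilibrium path, matches what the paper's induction establishes.
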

\begin{proof}
Let $\sigma^*$ be an optimal outcome. The maximum value an outcome can achieve is when all players choose to travel on a single bus, and we argue that it is an upper bound on the $\SPoA{}(E)$ of ${(2n-1)E(\sigma^*)}$. This value comes from Fotakis et al.~\cite{fotakis2017selfish}, in which they proved that ${d(u,v) \le 2E(\sigma^*)}$ for all pairs $u,v \in N$ and $d(u,t) \le E(\sigma^*)$ for all $u \in N$. Then, since all players are on a single bus, we have that the path which will be used by it has only one edge directly connected to $t$ and the remaining $n-1$ edges are used to pick up all players. Hence, this path values at most ${(n-1)2E(\sigma^*) + E(\sigma^*) = (2n-1)E(\sigma^*)}$.

Now, for the lower bound, we provide a family of instances containing one \SPE{} with value that matches the given upper bound. These instances are given by $(N,M,G)$ where $|N| = |M| = n$, and the graph depicted in Figure~\ref{fig:SPoA_D} with~${\varepsilon = 2}$. Let~$\pi_j$, for $j \in M$, be the identity permutation, i.e., $\pi_j = (1, \ldots, n)$. Observe that an optimal outcome $\sigma^*$ is the one where each bus is being used by a single player, and hence we have that $E(\sigma^*) = 1$. 

We will show, by backward induction on the player's index, that there exists at least one \SPE{} $\lambda$ where, in its outcome, all players choose the same bus, and therefore we get that~${\SPoA{}(E) \ge 2n-1}$. For player $n$, since she is the last player to be picked up in all buses, her cost will always be $1$ despite the bus she chooses to travel in. Then, for any $\sigma_{<n}$, she can set $\lambda_n(\sigma_{<n}) = \sigma_{n-1}$ (that is always choose the same  bus chosen by player~${n-1}$) and still obtain cost $1$.

Now, suppose the claim is valid for all players $k+1, \ldots, n$, and consider player~$k$. Here, observe that the actions taken by her predecessors do not influence her cost because all of them are being picked up before her according to permutations $\pi$. She has $m$ options of buses which will all give her a cost of ${2n-2k+1}$ since, by the induction hypothesis, players $k+1, \ldots, n$ will choose the same bus as player $k$, and, therefore, player~$k$ cannot avoid traveling with all of them. Thus, for any $\sigma_{<k}$, she can set $\lambda_k(\sigma_{<k}) = \sigma_{k-1}$ with a cost of ${2n-2k+1}$, since her cost will not be influenced by the decisions of her predecessors and she cannot avoid her successors. To have a better insight, take a look at the decision tree for an instance with~$3$ players, depicted in Figure~\ref{fig:ex_spoa2}.   

This will lead to an outcome where all players are choosing bus~$\sigma_1$ at each choice node of the extensive form of this game. Thus, 
$E(\lambda) = 2n - 1$, from where the result follows.
\end{proof}

\begin{figure}
    \centering
    \resizebox{\textwidth}{!}{%
    \begin{tikzpicture}[
    scale = 0.6,
    every node/.style={vertice, font=\large, scale=0.6},    
    level 1/.style={sibling distance=8cm},
    level 2/.style={sibling distance=2.5cm},
    level 3/.style={sibling distance=1cm},
    level 4/.style={sibling distance=2cm,level distance=20mm},
    ]
    \node {1}
child[emph] {
    node {2}
    child[emph] {
        node {3}
            child[emph] {
                node[leaf] {}
            }
            child[norm] {
                node[leaf] {}
            }    
            child[norm] {
                node[leaf] {}
            }
        }
    child[norm] {
        node {3}
        child[norm] {
                node[leaf] {}
            }
            child[emph] {
                node[leaf] {}
            }    
            child[norm] {
                node[leaf] {}
            }
    }    
    child[norm] {
        node {3}
        child[norm] {
                node[leaf] {}
            }
            child[norm] {
                node[leaf] {}
            }    
            child[emph] {
                node[leaf] {}
            }
    }
}
child[norm] {
    node {2}
    child[norm] {
        node {3}
        child[emph] {
                node[leaf] {}
            }
            child[norm] {
                node[leaf] {}
            }    
            child[norm] {
                node[leaf] {}
            }
    }
    child[emph] {
        node {3}
        child[norm] {
                node[leaf] {}
            }
            child[emph] {
                node[leaf] {}
            }    
            child[norm] {
                node[leaf] {}
            }
    }    
    child[norm] {
        node {3}
        child[norm] {
                node[leaf] {}
            }
            child[norm] {
                node[leaf] {}
            }    
            child[emph] {
                node[leaf] {}
            }
    }    
}
child[norm]{
    node {2}
    child[norm] {
        node {3}
        child[emph] {
                node[leaf] {}
            }
            child[norm] {
                node[leaf] {}
            }    
            child[norm] {
                node[leaf] {}
            }
    }
    child[norm] {
        node {3}
        child[norm] {
                node[leaf] {}
            }
            child[emph] {
                node[leaf] {}
            }    
            child[norm] {
                node[leaf] {}
            }
    }    
    child[emph] {
        node {3}
        child[norm] {
                node[leaf] {}
            }
            child[norm] {
                node[leaf] {}
            }    
            child[emph] {
                node[leaf] {}
            }
    }    
};  
\end{tikzpicture}
    }
    \caption[A Subgame Perfect Equilibrium of a transportation game used to show a lower bound on the Sequential Price of Anarchy for social function $E$]{Decision tree for the game with three players. Consider that each of the three edges of each node is labeled as follows, from left to right: bus $1$, $2$, and $3$. Then, the dashed red edges indicate the action chosen at each node.}
    \label{fig:ex_spoa2}
\end{figure}
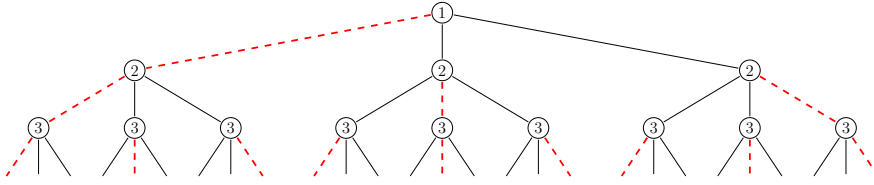

\subsection{Function U with Metric Instances}
We begin by showing a lower bound for $\SPoS{}(U)$.


\begin{proposition}\label{prop_util_SPoS}
For metric transportation games with $n$ players and $m$ buses, ${\SPoS{}(U) \ge 2\frac{n}{m}-1}$, even restricted to all buses having the same permutation.
\end{proposition}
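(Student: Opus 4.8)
The plan is to exhibit, for each $n$ and $m$ with $m \mid n$ (the general case being recovered by padding, as in Proposition~\ref{prop_E_SPoS}), a single metric family of instances whose every \SPE{} has utilitarian cost approaching $\left(2\frac{n}{m}-1\right)$ times the optimum; since \SPoS{} is a \emph{minimum} over \SPE{}s, the bad outcome must be unavoidable. Write $r=n/m$. I would use $m$ ``remote'' players, all at distance $A$ from $t$ and pairwise at distance $2\delta$, together with $m(r-1)$ ``central'' players forming a tight cluster at distance $\varepsilon$ from $t$ (so central players are pairwise at distance $\approx 0$ and at distance $\approx A$ from every remote player); all buses share one permutation that lists every central player before every remote player. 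The parameters satisfy $\varepsilon,\delta \to 0$ with $A$ fixed.

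First I would compute the optimum. Placing all $m$ remote players on one bus costs $\sum_{p}(2\delta(m-p)+A)\to mA$ as $\delta\to 0$, and placing the whole central cluster on a second bus costs $\approx m(r-1)\varepsilon\to 0$; hence $U(\sigma^*)\to mA$. The crucial point is that the tiny mutual distance $2\delta$ lets the optimum ``pack'' all remote players onto a single bus at negligible overhead.

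The heart of the argument is a backward-induction/forcing claim about \SPE{}s. Because the remote players occupy the last $m$ positions of the permutation, a remote player's cost equals her own distance to $t$ plus only the detours caused by remote companions sharing her bus; sharing with another remote player costs an extra $2\delta>0$, so---exactly as in Proposition~\ref{prop_D_SPoS} with $m$ players and $m$ buses---every \SPE{} sends exactly one remote player to each bus, \emph{independently of how the central players are distributed}. Consequently every bus ends its route by leaving the central cluster, driving out to its unique remote player (distance $\approx A$) and returning to $t$ (distance $A$), so each central player is dragged along and pays $\approx 2A$ no matter which bus she picked. Thus every \SPE{} $\lambda$ satisfies $U(\lambda)\to 2A\,m(r-1)+mA=mA(2r-1)$, and dividing by $U(\sigma^*)\to mA$ gives ${\SPoS{}(U)\ge 2\frac{n}{m}-1}$ in the limit.

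I expect the main obstacle to be this forcing step, and in particular the tension it places on $\delta$: the remote players must \emph{strictly} prefer to separate (so that every \SPE{}, and not merely some, spreads them one per bus and leaves the central players no empty-of-remote bus to escape to), which requires $\delta>0$; yet the optimum must be able to pack them together almost for free, which requires $\delta$ small. Verifying that this single small parameter achieves both at once, and confirming that the central players are genuinely indifferent in every subgame (so that no \SPE{} can route them cheaply), is the delicate part. The divisibility hypothesis $m\mid n$ is then removed by appending dummy players located at $t$ at the end of the permutation, which contribute nothing to either cost.
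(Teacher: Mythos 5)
Your construction is essentially the paper's own (the paper takes $A=1$, puts the central cluster at distance exactly $0$ from $t$, and uses pairwise remote distance $\varepsilon$ in place of your $2\delta$), but there is a genuine gap in the forcing step, located exactly where you leave the instance underspecified: the order of the remote players \emph{within} the shared permutation relative to the order in which they decide. Your stated reason for the forcing --- ``sharing with another remote player costs an extra $2\delta>0$'' --- is false as a blanket claim: a remote player only pays for companions picked up \emph{after} her, so the remote picked up last on her bus pays exactly $A$ and is indifferent among all buses in every subgame. Whether this indifference is harmless depends on the orientation. If the remotes appear in the permutation in the \emph{same} order as they decide, the indifferent player is the \emph{last} decider, and mimicry produces a cheap \SPE{}: let the last remote always copy $\sigma_{n-1}$; then the second-to-last remote pays $A+2\delta$ no matter what she does, hence is also indifferent and may copy her own predecessor, and so on down the chain, with the first remote fixed, say, on bus~$1$ in every subgame. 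This is \SPE{}-consistent (remotes' costs never depend on the centrals, who are picked up earlier), so the centrals --- who decide first but anticipate it --- all take a remote-free bus and pay $O(\varepsilon)$, giving $U(\lambda)\to mA=U(\sigma^*)$ and ratio $1$. Since \SPoS{} quantifies over the \emph{best} \SPE{}, this single good equilibrium destroys the lower bound.

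The fix, and what the paper actually does, is to make the remotes' pickup order the \emph{reverse} of the decision order: $\pi_j=(1,\ldots,n-m,\,n,n-1,\ldots,n-m+1)$ while players decide in order $1,\ldots,n$. Then each remote's cost depends only on remotes picked up after her, i.e., on players who have \emph{already} chosen; the indifferent player is now the first-deciding remote, which is harmless, since every later remote strictly saves $2\delta$ per avoided companion and therefore, in every subgame of every \SPE{}, avoids the (inductively distinct) buses of her remote predecessors --- this is the sense in which Proposition~\ref{prop_D_SPoS} applies, and its permutation $(n,n-1,\ldots,1)$ embodies precisely this reversal. With the orientation pinned down, the rest of your argument (optimum tending to $mA$, every central paying roughly $2A$ regardless of her choice, ratio tending to $2\frac{n}{m}-1$) goes through as in the paper. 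Two minor remarks: your divisibility hypothesis $m\mid n$ is unnecessary, since the construction just needs $m$ remote and $n-m$ central players, so no padding is required; and the tension you flag around $\delta$ is real but mild --- any fixed $\delta>0$ gives strict separation in equilibrium, while the optimum's packing overhead $\delta m(m-1)$ vanishes as $\delta\to 0$, so a single small parameter suffices (the paper's $\varepsilon\to 0$ plays this role).
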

\begin{proof}
For this lower bound, consider an instance $(N,M,G)$ where $|N|=n$, $|M| = m$, and the graph is the one showed in Figure~\ref{fig:pos_U}. For~${j \in M}$, let ${\pi_j = (1, \ldots, n-m, n, n-1, \ldots, n-m+1)}$. It is possible to see that there is an outcome $\sigma^*$ where players $\{n-m+1,\ldots, n\}$ are on bus~$1$ and the remaining players are on bus~$2$, and thus ${U(\sigma^*)=m+\frac{m(m-1)}{2}\varepsilon}$.  

Now, in any $\SPE{}$ $\lambda$, player $n$ will always choose a different bus chosen by any of the players~${\{n-m+1, \ldots, n-1\}}$ since she is the last one to enter in the game. She can do this because there are $m$ buses and she is in the last~$m$ positions of all buses' permutations, and therefore she will guarantee herself a cost of $1$. Again, the same argument can be done backwardly for players~${\{n-m+1, \ldots, n-1\}}$, which will also guarantee to all of them a cost of~$1$. Finally, the remaining players will get cost $2$ independently of their buses' choice, so $U(\lambda) = m+2(n-m) = 2n - m$. Therefore, \[{\SPoS{}(U) \ge \frac{2n - m}{m+\frac{m(m-1)}{2}\varepsilon}},\] and the results follows when $\varepsilon$ goes to zero.
\end{proof}

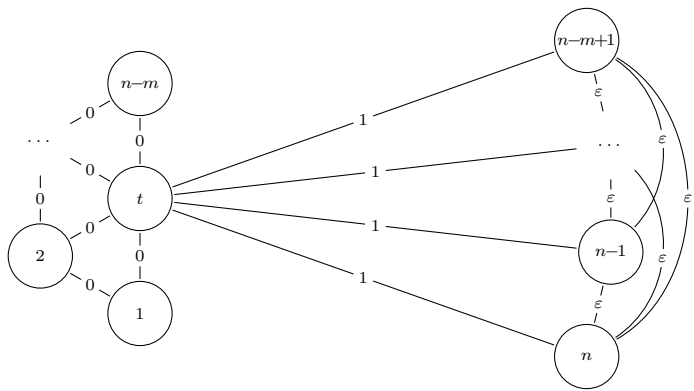
\begin{figure}
\centering 
\begin{tikzpicture}[modified large vertice/.style={large vertice, minimum size=27pt}, scale=0.9, every node/.style={scale=0.9}]

    \node[modified large vertice] (t) at (0,0) {$t$};
    \node[modified large vertice] (1) at (270:1.7) {$1$};
    \node[modified large vertice] (2) at (150+60:1.7) {$2$};
    \node[modified large vertice, draw=none] (dot) at (150:1.7) {$\dots$};
    \node[modified large vertice] (n-m) at (90:1.7) {$n\!\!-\!\!m$};

    \foreach \x in {1, 2, dot, n-m}{
        \draw (\x) to node[weight] {$0$} (t);
    }

    \draw (1) to node[weight] {$0$} (2);
    \draw (2) to node[weight] {$0$} (dot);
    \draw (dot) to node[weight] {$0$} (n-m);

    \def\arco{13}
    \def\raio{7}
    \def\bend{50}
    \node[modified large vertice] (n-m+1) at (1.5 * \arco:\raio) {$n\!\!-\!\!m\!\!+\!\!1$};
    \node[modified large vertice, draw=none] (dot2) at (0.5 * \arco:\raio) {$\dots$};
    \node[modified large vertice] (n-1) at (-0.5 * \arco:\raio) {$n\!\!-\!\!1$};
    \node[modified large vertice] (n) at (-1.5 * \arco:\raio) {$n$};

    \foreach \x in {n-m+1, dot2, n-1, n}{
        \draw (\x) to node[weight] {$1$} (t);
    }
    \draw (n-m+1) to node[weight] {$\varepsilon$} (dot2);
    \draw (dot2) to node[weight] {$\varepsilon$} (n-1);
    \draw (n-1) to node[weight] {$\varepsilon$} (n);
    \draw (n-m+1) to[bend left=\bend] node[weight] {$\varepsilon$} (n-1);
    \draw (n-m+1) to[bend left=60] node[weight] {$\varepsilon$} (n);
    \draw (dot2) to[bend left=\bend] node[weight] {$\varepsilon$} (n);


\end{tikzpicture}
\caption[A metric instance of the sequential transportation game which is used to show a lower bound on the Sequential Price of Anarchy for social function $D$]{Graph $G$ where 
$d(u,t) = d(u, v) = 0$ for all $u,v \in \{1,\dots, n-m\}$,
$d(u,t) = 1$ and $d(u, v) = \varepsilon$ for all $u,v \in \{n-m+1, \dots, n\}$.}
\label{fig:pos_U}
\end{figure}

We now show a lower bound on the value of an optimal outcome for $U$.

\begin{proposition}\label{prop_util}
Let $\sigma^*$ be an optimal outcome. If $d$ is metric, then $\sum_{i \in N} d(i,t) \le U(\sigma^*)$.
\end{proposition}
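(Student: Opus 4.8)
The plan is to prove the slightly stronger and cleaner fact that $c_i(\sigma) \ge d(i,t)$ holds for \emph{every} player $i$ and \emph{every} outcome $\sigma$; the claim then follows immediately by summing over all players and specializing to $\sigma = \sigma^*$. This reduction is worthwhile because nothing in the statement actually uses optimality of $\sigma^*$, so isolating the per-player inequality keeps the argument transparent.

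First I would fix a player $i$ and let $j = \sigma_i$ be the bus she travels on, and suppose $i$ is the $k$-th player picked up by bus $j$, i.e. $i = p_k^j$. By definition, $c_i(\sigma)$ is the distance traveled by bus $j$ from $i$'s location to $t$, which is exactly the length of the route segment $p_k^j \rightarrow p_{k+1}^j \rightarrow \cdots \rightarrow p_{n_j}^j \rightarrow t$. Using the convention $p_{n_j+1}^j = t$ already fixed in the model, this gives
\[
c_i(\sigma) = \sum_{l=k}^{n_j} d\bigl(p_l^j, p_{l+1}^j\bigr).
\]
The key step is then a single application of the generalized triangle inequality: since $d$ is metric, for any finite sequence of nodes $x_0, x_1, \ldots, x_r$ we have $d(x_0, x_r) \le \sum_{l=0}^{r-1} d(x_l, x_{l+1})$, which follows by an immediate induction on $r$ from the triangle inequality $d(x,w) \le d(x,y) + d(y,w)$. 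Applying this to the path above, with first node $x_0 = p_k^j = i$ and last node $x_r = p_{n_j+1}^j = t$, yields $d(i,t) \le c_i(\sigma)$.

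Finally I would sum this per-player bound over all $i \in N$, obtaining $\sum_{i \in N} d(i,t) \le \sum_{i \in N} c_i(\sigma) = U(\sigma)$ for any outcome $\sigma$, and in particular for $\sigma = \sigma^*$, which is the desired inequality. I do not expect any real obstacle here: the entire content is the generalized triangle inequality, and the only point requiring minor care is the bookkeeping of the route segment seen by player $i$ (namely that her cost accounts exactly for the edges of bus $j$ from her position onward to $t$, under the convention $p_{n_j+1}^j = t$), which ensures the telescoping path genuinely runs from $i$ to $t$.
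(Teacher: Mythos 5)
Your proof is correct and takes essentially the same route as the paper: the paper's one-line argument is exactly your per-player inequality $d(i,t) \le c_i(\sigma^*)$ via the triangle inequality, followed by summing over players. Your write-up merely makes explicit the telescoping/generalized triangle inequality along the bus's remaining route and the (accurate) observation that optimality of $\sigma^*$ plays no role.
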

\begin{proof}
Because of the triangle inequality, for a player $i$ we have that ${d(i,t) \le c_i(\sigma^*)}$, and the result follows.
\end{proof}

Next, we show a general bound for $U$ in any outcome of a transportation game. Then, using this result, we show a tight bound on $\SPoA{}(U)$.
 
\begin{lemma}\label{functionU_upperB}
For a metric transportation game with $n$ players, let $\sigma$ be an outcome and $\sigma^*$ an optimal outcome for $U$, then ${U(\sigma) \le (2n-1)U(\sigma^*)}$.
\end{lemma}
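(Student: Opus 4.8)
The plan is to prove the stronger, equilibrium-free bound $U(\sigma) \le (2n-1)\sum_{i \in N} d(i,t)$ for \emph{every} outcome $\sigma$, and then invoke Proposition~\ref{prop_util}, which gives $\sum_{i \in N} d(i,t) \le U(\sigma^*)$, to conclude. Since the target bound makes no reference to equilibrium, I can work one bus at a time: because $U$ decomposes as $U(\sigma) = \sum_{j \in M} \sum_{k=1}^{n_j} c_{p_k^j}(\sigma)$, it suffices to bound the contribution of a single bus $j$ by $(2n-1)\sum_{k=1}^{n_j} d(p_k^j,t)$ and then sum over buses.

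First I would rewrite the cost of the $k$-th passenger of bus $j$ as the tail of its route, $c_{p_k^j}(\sigma) = \sum_{i=k}^{n_j} d(p_i^j, p_{i+1}^j)$ (recalling the convention $p_{n_j+1}^j = t$), and then swap the order of summation. The edge $d(p_i^j, p_{i+1}^j)$ is counted once for every passenger whose pickup position $k$ satisfies $k \le i$, i.e.\ exactly $i$ times, yielding the identity
\[
\sum_{k=1}^{n_j} c_{p_k^j}(\sigma) = \sum_{i=1}^{n_j} i\, d(p_i^j, p_{i+1}^j).
\]

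Next I would relax each route edge with the triangle inequality: for $i < n_j$ we have $d(p_i^j, p_{i+1}^j) \le d(p_i^j, t) + d(p_{i+1}^j, t)$, while the final edge is already $d(p_{n_j}^j, p_{n_j+1}^j) = d(p_{n_j}^j, t)$ and needs no relaxation. Substituting and collecting terms by passenger, the quantity $d(p_k^j, t)$ receives weight $k$ from the edge leaving $p_k^j$ and weight $k-1$ from the edge entering it, for a total of $2k-1$ in every case (including the boundaries). Hence
\[
\sum_{k=1}^{n_j} c_{p_k^j}(\sigma) \le \sum_{k=1}^{n_j} (2k-1)\, d(p_k^j, t) \le (2n_j - 1)\sum_{k=1}^{n_j} d(p_k^j, t) \le (2n-1)\sum_{k=1}^{n_j} d(p_k^j, t),
\]
where the last step uses $n_j \le n$. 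Summing over all buses collapses the right-hand side to $(2n-1)\sum_{i \in N} d(i,t)$, and Proposition~\ref{prop_util} then delivers $U(\sigma) \le (2n-1)U(\sigma^*)$.

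The one place that needs care is the coefficient bookkeeping in the collecting step: I would verify the two boundary cases explicitly, namely $k=1$ (no entering edge, weight $2\cdot 1 - 1 = 1$) and $k=n_j$ (the outgoing edge runs straight to $t$ contributing weight $n_j$, plus weight $n_j-1$ from the entering edge, totalling $2n_j-1$), so that the uniform factor $2n-1$ emerges exactly rather than as a loose over-estimate. Everything else is a routine rearrangement of finite sums.
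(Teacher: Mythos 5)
Your proposal is correct and follows essentially the same route as the paper's own proof: the same identity $U(\sigma)=\sum_{j}\sum_{i} i\,d(p_i^j,p_{i+1}^j)$, the same triangle-inequality relaxation through $t$, the same collection of coefficients into $2i-1$, and the same final appeal to Proposition~\ref{prop_util}. The only (harmless) refinements are your per-bus intermediate bound $2n_j-1$ and the explicit boundary-case check, which the paper handles implicitly via the convention $p_{n_j+1}^j=t$.
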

\begin{proof}
Let $\sigma$ and $\sigma^*$ be an outcome and optimal outcome for $U$, respectively. Let $p_{i}^{j}$ be the $i$-th player to be picked up by bus $j$ and ${n_j = |\{i: \sigma_i = j \}|}$. Also, we consider that $j_{n_{j}+1} = t$ for all $j \in M$. We have that

\begin{align}
U(\sigma)
& = \sum_{j \in M} \sum_{i \in N \,\vert\, \sigma_i = j} c_i(\sigma) \nonumber\\
& = \sum_{j=1}^{m} \sum_{i=1}^{n_j} i \cdot d(p_{i}^{j},p_{i+1}^{j}) \label{eqx2}\\ 
& \le \sum_{j=1}^{m}\sum_{i=1}^{n_j} i\, (d(p_{i}^{j},t) + d(t,p_{i+1}^{j}))   \label{eqx3}\\ 
&= \sum_{j=1}^{m}\sum_{i=1}^{n_j} (2i-1)d(p_{i}^{j},t)\nonumber \\
&\le (2n-1)\sum_{j=1}^{m}\sum_{i=1}^{n_j} d(p_{i}^{j},t)   \nonumber \\
&\le (2n-1) U(\sigma^*), \label{eqx6}
\end{align}
where Inequality~(\ref{eqx3}) follows from the triangle inequality. Finally, we get the Inequality~\ref{eqx6} from the fact that the summations are composed by $n$ terms of~$d(j_i,t)$, and then we use Proposition~\ref{prop_util} to get the final result.
\end{proof}

\begin{theorem}\label{cor_SPoA_U}
For metric transportation games with $n$ players, ${\SPoA{}(U) = 2n-1}$.
\end{theorem}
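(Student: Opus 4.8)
The upper bound is immediate: the outcome of any \SPE{} $\lambda$ is in particular an outcome $\sigma \in \mathcal{P}$, so Lemma~\ref{functionU_upperB} gives $U(\lambda) \le (2n-1)U(\sigma^*)$ and hence $\SPoA{}(U) \le 2n-1$ over the whole class. All the work is therefore in producing a matching family of instances, each carrying an \SPE{} whose utilitarian cost is arbitrarily close to $2n-1$ times the optimum.

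My plan for the lower bound is to make the two inequalities in the proof of Lemma~\ref{functionU_upperB} simultaneously tight. The slack there comes from the triangle inequality in~(\ref{eqx3}) and from replacing the coefficient $(2i-1)$ by its maximum $2n-1$. To remove the second source I want all the distance concentrated on the single player picked up \emph{last} (the one with coefficient $2n-1$); to remove the first I want consecutive players on the overloaded bus to be joined through $t$. Concretely, I would take $n$ players and $m \ge 2$ buses, all with the identity permutation, placing players $1,\dots,n-1$ at distance $\delta$ from $t$ and pairwise at distance $\delta$, and placing player $n$ at distance $1$ from $t$ and from every other player; for every $\delta>0$ this is a metric instance. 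An optimal outcome isolates player $n$ on her own bus and puts the rest together, so $U(\sigma^*)\to 1$ as $\delta\to 0$ (and $U(\sigma^*)\ge 1$ already by Proposition~\ref{prop_util}).

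The heart of the argument is to exhibit one \SPE{} in which all players ride the same bus, in the spirit of Theorem~\ref{SPOA_E}. I would verify by backward induction on the play order that the profile ``player $k$ copies the bus chosen by player $k-1$'' (with player $1$ choosing arbitrarily) is an \SPE{}. The base case is player $n$: being last in every permutation she pays $d(n,t)=1$ regardless of her choice, so she is indifferent and may copy player $n-1$. For the inductive step, when player $k<n$ moves in any subgame, the successors' strategies send all of $k+1,\dots,n$ onto whichever bus she selects; since the only far player, player $n$, is thereby always on $k$'s bus while her predecessors are picked up before her, player $k$'s cost equals $2+O(\delta)$ no matter which bus she picks, so copying is (weakly) optimal. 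Evaluating the resulting outcome, the last player pays $1$ and each of the remaining $n-1$ players pays $2+O(\delta)$, so $U(\lambda)=2n-1+O(\delta)\to 2n-1$ and therefore $\SPoA{}(U)\ge 2n-1$.

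The step I expect to be the main obstacle is the \SPE{} verification rather than the arithmetic: one must check that ``copy your predecessor'' is optimal in \emph{every} subgame, including off the equilibrium path, and this is exactly where the degeneracy of the instance earns its keep — concentrating all the distance on the last-picked player is what makes every player indifferent among buses in every continuation, so no unilateral deviation in any subgame can help and the profile is genuinely subgame perfect. Combining this with the upper bound gives $\SPoA{}(U)=2n-1$.
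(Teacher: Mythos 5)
Your proposal is correct and follows essentially the same route as the paper: the upper bound is quoted from Lemma~\ref{functionU_upperB}, and the lower bound uses the same construction as Figure~\ref{fig:SPoA_U} (one far player at distance $1$, the rest clustered at $t$, identity permutations) together with the same ``copy your predecessor'' backward-induction \SPE{} borrowed from Theorem~\ref{SPOA_E}. The only difference is cosmetic: you perturb the zero distances to $\delta>0$ and take $\delta\to 0$, whereas the paper uses the degenerate instance directly and gets the ratio $2n-1$ exactly.
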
 
\begin{proof}
The upper bound comes from Lemma~\ref{functionU_upperB}. For the lower bound, consider an instance $(N,M,G)$ where $|N| = |M| =n$, and the graph that is showed in Figure~\ref{fig:SPoA_U}. Let~$\pi_j$, for $j \in M$, be the identity permutation $(1,\ldots, n)$. It is possible to see that there is a strategy profile $\lambda^*$ such that, in its outcome, player~$n$ is on bus $1$ and the remaining players are on bus $2$, and thus $U(\lambda^*)=1$.   

Now, we can show in a similar way done on Theorem~\ref{SPOA_E} that there exists at least one $\SPE{}$ $\lambda$ where, in its outcome, all players choose the same bus since player $n$ can always choose the same bus as player $n-1$ and player $n-1$ can always choose the same bus as player $n-2$, and so on. Therefore we get that~${U(\lambda) = 2n-1}$, which completes the proof.
\end{proof}

\begin{figure}
\centering 
\begin{tikzpicture}    

    \node[large vertice] (t) at (0,0) {$t$};
    \node[large vertice] (1) at (270:1.7) {$1$};
    \node[large vertice] (2) at (270-45:1.7) {$2$};
    \node[large vertice, draw=none] (dot) at (270-90:1.7) {$\dots$};
    \node[large vertice] (n-2) at (90+45:1.7) {$n-2$};
    \node[large vertice] (n-1) at (90:1.7) {$n-1$};

    \foreach \x in {1, 2, dot, n-2, n-1}{
        \draw (\x) to node[weight] {$0$} (t);
    }

    \draw (1) to node[weight] {$0$} (2);
    \draw (n-2) to node[weight] {$0$} (n-1);
    \draw (n-2) to node[weight] {$0$} (dot);
    \draw (2) to node[weight] {$0$} (dot);

    \node[large vertice] (n) at (0:4) {$n$};

    \draw (t) to node[weight] {$1$} (n);

\end{tikzpicture}
\caption[A metric instance of the sequential transportation game which is used to show a lower bound on the Sequential Price of Anarchy for social function $E$]{Graph $G$ where 
$d(u, t) = d(u, v) = 0$ for all $u,v \in \{1, \dots, n-1\}$ and
$d(n,t) = 1$.}
\label{fig:SPoA_U}
\end{figure}
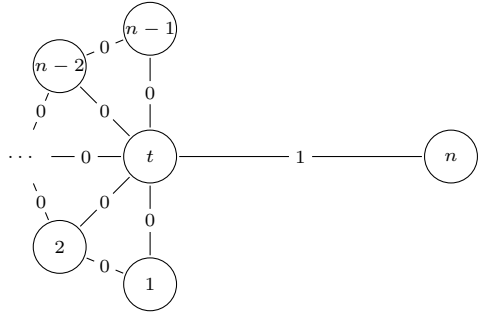

\section{Function $U$ in simultaneous games}\label{sec:U_function}
Since we have introduced the analysis of the utilitarian function $U(\sigma)$, which represents the sum of the distances that players will get to $t$, we will, in this section, compute bounds for both $\POA{}(U)$ and $\POS{}(U)$ for simultaneous transportation games. 

We note that the same instance used by Fotakis et al.~\cite{fotakis2017selfish} to show that the \POA{} is unbounded for functions $D$ and $E$, also shows that the \POA{} is unbounded for function $U$.

\begin{corollary}\label{cor_PoA_U}
For non-metric transportation games, $\POA{}(U)$ is unbounded.
\end{corollary}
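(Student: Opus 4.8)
The plan is to verify that the non-metric witness of Fotakis et al.\ for the unboundedness of $\POA{}(D)$ and $\POA{}(E)$ already forces $U$ to blow up as well, so that no new instance needs to be built. First I would recall the structure of that family: it contains a Nash equilibrium $\sigma$ in which some player is compelled to travel an arbitrarily large distance (governed by a parameter, say $X \to \infty$, in the construction), whereas some optimal outcome keeps every player's cost bounded independently of $X$. Here the number of players $n$ stays fixed while $X$ grows.

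The core observation is the elementary pointwise inequality $U(\sigma) = \sum_{i \in N} c_i(\sigma) \ge \max_{i \in N} c_i(\sigma) = E(\sigma)$, valid for every outcome since all costs are nonnegative. Thus the very equilibrium $\sigma$ that drives $E(\sigma) \to \infty$ also drives $U(\sigma) \to \infty$. It then remains only to certify that the denominator stays bounded. For this I would bound the $U$-optimum by any convenient outcome: letting $\sigma^*$ be an outcome minimizing $U$ and $\tau$ an outcome minimizing $E$, we have $U(\sigma^*) \le U(\tau) \le n\,E(\tau)$, and $E(\tau)$ does not depend on the blow-up parameter $X$. Hence $U(\sigma^*)$ is bounded while $U(\sigma) \to \infty$.

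Putting these together on the fixed instance gives
\[
\POA{}(U) \;\ge\; \frac{U(\sigma)}{U(\sigma^*)} \;\ge\; \frac{E(\sigma)}{n\,E(\tau)},
\]
and the right-hand side tends to $\infty$ as $X \to \infty$ (with $n$ fixed), since $E(\sigma)/E(\tau)$ is exactly the quantity shown to be unbounded for $E$. Therefore $\POA{}(U)$ is unbounded.

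The only point requiring genuine care is that $\sigma$ must truly be a Nash equilibrium of the instance, not merely an outcome of large social cost: the player whose cost grows with $X$ must have no profitable unilateral deviation (every alternative bus must cost her at least as much, which is where the violation of the triangle inequality is exploited), and no other player may wish to move either. Since Fotakis et al.\ already establish precisely this equilibrium property on the same instance for $D$ and $E$, I would simply reuse their verification; everything else reduces to the two inequalities displayed above.
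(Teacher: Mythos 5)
Your proposal is correct and takes essentially the same route as the paper, which disposes of this corollary with the one-line observation that the very instance Fotakis et al.\ use to show $\POA{}(D)$ and $\POA{}(E)$ are unbounded also works for $U$; your sandwich $E(\sigma) \le U(\sigma) \le n\,E(\sigma)$, applied with $n$ fixed while the parameter $X \to \infty$, is just a clean formalization of that remark. You also correctly identify the two points the remark leaves implicit --- that $\sigma$ really is a Nash equilibrium (inherited from Fotakis et al.'s verification) and that the fixed number of players is what keeps the $U$-optimum bounded --- so there is no gap.
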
 

Now, for the metric case, we will analyze its inefficiency by giving bounds on the $\POS{}(U)$ and $\POA{}(U)$. Using the same arguments from Proposition~\ref{prop_util_SPoS}, it is possible to prove the following proposition.

\begin{proposition}\label{prop_util_PoS}
For metric transportation games with $n$ players and $m$ buses, ${\POS{}(U)\ge 2\frac{n}{m}-1}$, even restricted to all buses having the same permutation.\qed
\end{proposition}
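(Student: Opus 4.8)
The plan is to reuse verbatim the instance and permutation from the proof of Proposition~\ref{prop_util_SPoS}: the graph of Figure~\ref{fig:pos_U} with $|N|=n$, $|M|=m$, and $\pi_j=(1,\dots,n-m,n,n-1,\dots,n-m+1)$ for every $j\in M$, together with the same outcome $\sigma^*$ (players $\{n-m+1,\dots,n\}$ on one bus, the rest on another), for which $U(\sigma^*)=m+\frac{m(m-1)}{2}\varepsilon$. Since $\POS{}(U)$ compares the \emph{best} Nash equilibrium against the optimum, the whole task reduces to proving that \emph{every} outcome in $\PNE{}$ of this instance has social cost exactly $2n-m$; the bound then follows from $\POS{}(U)\ge\frac{2n-m}{m+\frac{m(m-1)}{2}\varepsilon}$ by letting $\varepsilon\to 0$, which tends to $\frac{2n}{m}-1$.

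The heart of the argument is to replace the backward-induction step of Proposition~\ref{prop_util_SPoS} by a unilateral-deviation argument appropriate for simultaneous play. Call the players $\{n-m+1,\dots,n\}$ \emph{far} (they satisfy $d(\cdot,t)=1$) and the remaining players \emph{near} ($d(\cdot,t)=0$). In $\pi$ all near players precede all far players, and among the far players the visiting order is decreasing in index, so a far player $v$ is visited before exactly those far players of smaller index that share her bus; her cost is therefore $1$ plus $\varepsilon$ times that count, and in particular $c_v\ge d(v,t)=1$ in any outcome (the metric version of Proposition~\ref{prop_util}). I then show that in any equilibrium $c_v=1$ for every far $v$: the far players that follow $v$ in $\pi$ are the at most $m-1$ far players of smaller index, so they occupy at most $m-1$ of the $m$ buses; hence some bus carries none of them, and were $v$ to move there she would become the last player visited and pay exactly $d(v,t)=1$. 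Equilibrium thus forces $c_v=1$, which is possible only if no smaller-indexed far player shares $v$'s bus; applying this to all far players at once shows that in every equilibrium the $m$ far players occupy $m$ distinct buses.

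With the far players spread over all $m$ buses, each bus carries exactly one far player, so every near player is picked up before the unique far player $v$ of her bus; travelling through the other near players costs $0$, the hop to $v$ costs $d(\cdot,v)=1$ (from the metric completion), and the final hop $d(v,t)=1$, giving each near player cost $2$. Hence every equilibrium has $U=m\cdot 1+(n-m)\cdot 2=2n-m$; one checks that such a configuration is itself an equilibrium (no far player can beat cost $1$, and no near player can escape cost $2$ once every bus holds a far player), so $\PNE{}\neq\emptyset$ and $\POS{}(U)$ is well defined. Combining the two bounds and letting $\varepsilon\to 0$ yields $\POS{}(U)\ge 2\frac{n}{m}-1$, and the case $m\nmid n$ is handled exactly as in Proposition~\ref{prop_E_SPoS} by appending dummy players at $t$ at the end of $\pi$.

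The main obstacle, and the only place where this proof genuinely diverges from the sequential one, is justifying that each far player can \emph{simultaneously} secure cost $1$: in the sequential game farsighted backward induction resolves the conflicts one player at a time, whereas here I must exhibit, for an arbitrary candidate equilibrium, a profitable deviation for any far player not already paying $1$. The pigeonhole count---at most $m-1$ obstructing far players against $m$ buses---is what keeps such a deviation always available, and it is essential that there be exactly $m$ far players and $m$ buses. I would double-check the extremal case where $v$ has the largest index (so the $m-1$ other far players could block $m-1$ distinct buses, leaving precisely one free) and confirm the near-player cost computation, which rests on the metric completion forcing $d(u,v)=1$ for a near player $u$ and a far player $v$.
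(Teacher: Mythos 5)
Your proposal is correct and takes essentially the paper's approach: the paper proves Proposition~\ref{prop_util_PoS} merely by invoking ``the same arguments'' as Proposition~\ref{prop_util_SPoS} on the instance of Figure~\ref{fig:pos_U}, and your pigeonhole deviation argument (each far player faces at most $m-1$ smaller-index far players blocking the $m$ buses, so she can always deviate to pay exactly $d(v,t)=1$, forcing one far player per bus and cost $2$ for every near player, hence $U=2n-m$ in every equilibrium) is precisely the simultaneous-play translation of the backward-induction step, with the equilibrium-existence check a welcome addition. The only superfluous remark is the dummy-player treatment of $m\nmid n$: the construction and the ratio $\frac{2n-m}{m}=2\frac{n}{m}-1$ never require divisibility.
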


We finish this section by proving that the \POA{} is $\Theta(\frac{n}{m})$, where $n$ is the number of players and $m$ is the number of buses. 

\begin{theorem}\label{functionU_upper}
For metric transportation games with $n$ players and $m$ buses, $\POA{}(U) = \Theta(\frac{n}{m})$.
\end{theorem}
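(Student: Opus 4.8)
The plan is to establish matching bounds up to additive constants, namely $\frac{2n}{m}-1 \le \POA{}(U) \le \frac{2n}{m}+1$, which together give $\Theta(n/m)$. The lower bound requires no new work: since every instance satisfies $\POA{}(U) \ge \POS{}(U)$ and Proposition~\ref{prop_util_PoS} already yields $\POS{}(U) \ge 2\frac{n}{m}-1$, we immediately obtain $\POA{}(U) \ge 2\frac{n}{m}-1$. Hence the entire effort goes into the upper bound, and I note at the outset that the bound $U(\sigma)\le(2n-1)U(\sigma^*)$ from Lemma~\ref{functionU_upperB} is far too weak here; we must exploit the equilibrium condition to gain the factor $1/m$.

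For the upper bound, fix a Nash equilibrium $\sigma$ and an optimal outcome $\sigma^*$. The central idea is to use, for each player $i$, a family of deviations---one to each bus $j\in M$---and to convert the resulting inequalities into a bound on $c_i(\sigma)$ in terms of $\sum_{q\in N}d(q,t)$. First I would fix a player $i$ and a bus $j$, let $\sigma'$ be the outcome obtained from $\sigma$ by reassigning $i$ to bus $j$ (all other players unchanged), and write $A_{ij}$ for the set of players $q$ with $\sigma_q=j$ that are visited after $i$ in the permutation $\pi_j$. In $\sigma'$ the route of bus $j$ runs from $i$'s location through the players of $A_{ij}$ (in $\pi_j$-order, with shortcuts) and then to $t$; applying the triangle inequality to each edge of this path, splitting every $d(u,v)$ as $d(u,t)+d(t,v)$, yields $c_i(\sigma')\le d(i,t)+2\sum_{q\in A_{ij}}d(q,t)$. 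The equilibrium property gives $c_i(\sigma)\le c_i(\sigma')$, and since the same estimate also bounds the true cost of $i$ on her own bus (the case $j=\sigma_i$), the inequality $c_i(\sigma)\le d(i,t)+2\sum_{q\in A_{ij}}d(q,t)$ holds for every $j\in M$.

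The key step is then to average these $m$ inequalities over $j$. Because each player $q$ lies on exactly one bus $\sigma_q$, the sets $A_{ij}$ are pairwise disjoint across $j$, so $\sum_{j\in M}\sum_{q\in A_{ij}}d(q,t)\le\sum_{q\in N}d(q,t)$; averaging therefore gives $c_i(\sigma)\le d(i,t)+\frac{2}{m}\sum_{q\in N}d(q,t)$, and this averaging is precisely where the factor $1/m$ enters. Summing over all $n$ players and invoking Proposition~\ref{prop_util} in the form $\sum_{q\in N}d(q,t)\le U(\sigma^*)$ then yields $U(\sigma)\le\bigl(\tfrac{2n}{m}+1\bigr)\sum_{q\in N}d(q,t)\le\bigl(\tfrac{2n}{m}+1\bigr)U(\sigma^*)$, completing the upper bound.

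I expect the main obstacle to be the second paragraph: correctly bounding the deviation cost $c_i(\sigma')$ once $i$ is reinserted into bus $j$'s route. One must argue that only the players of $A_{ij}$ (those after $i$ in $\pi_j$) affect her cost, handle the shortcutting carefully, and verify that in the telescoping triangle-inequality expansion each term $d(q,t)$ is charged at most twice, so that the clean form $d(i,t)+2\sum_{q\in A_{ij}}d(q,t)$ is genuinely valid. Once that estimate is secured, the disjointness of the $A_{ij}$, the averaging over buses, and the final summation are routine.
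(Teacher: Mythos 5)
Your proposal is correct and in substance matches the paper's proof: both obtain the lower bound from Proposition~\ref{prop_util_PoS}, and both obtain the upper bound $\bigl(\tfrac{2n}{m}+1\bigr)U(\sigma^*)$ by combining the equilibrium deviation condition with the triangle-inequality expansion through $t$ (charging each $d(q,t)$ at most twice), gaining the factor $1/m$ by averaging over the $m$ buses, and finishing with Proposition~\ref{prop_util}. The only cosmetic difference is that the paper pigeonholes a single cheap bus $j^*$ (one whose extended route length is at most $\tfrac{2}{m}\sum_{k} d(k,t)$) and bounds every player's deviation to that one bus, whereas you average the $m$ per-player deviation inequalities directly, using the disjointness of the sets $A_{ij}$ across buses --- an equivalent reformulation of the same averaging argument.
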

\begin{proof}
We first prove the upper bound. Let $\sigma$ be an equilibrium and, as before, let ${n_j = |\{i: \sigma_i = j \}|}$ and $p_{i}^{j}$ be the $i$-th player to be picked up by bus $j$ in $\sigma$, where we consider that $j_{n_{j}+1} = t$ for all $j \in M$. Then, by triangle inequality, we have that, for all buses $j \in M$, 
$$\sum_{k=1}^{n_j} d(p_{k}^{j},p_{k+1}^{j}) \le \sum_{k=1}^{n_j} \left(d(p_{k}^{j},t) + d(t, p_{k+1}^{j})\right) = 2 \sum_{k=1}^{n_j} d(p_{k}^{j},t) - d(p_{1}^{j},t).$$
By summing up for all buses, we have that 
\[
\sum_{j=1}^{m} \left(d(p_{1}^{j},t) + \sum_{k=1}^{n_j} d(p_{k}^{j},p_{k+1}^{j})\right) \le 2 \sum_{j=1}^{m}\sum_{k=1}^{n_j} d(p_{k}^{j},t) = 2\sum_{k=1}^{n}d(k,t).
\]
Therefore, there must exist a bus $j^*$ such that 
\begin{align}
d(p_{1}^{j^*},t) + \sum_{k=1}^{n_{j^*}} d(p_{k}^{j^*},p_{k+1}^{j^*}) \le \frac{2 \sum_{k=1}^{n}d(k,t)}{m}. \label{inq_1}
\end{align}

Since $\sigma$ is an equilibrium, for all players $i$ such that ${\sigma_i \ne j^*}$ and considering that $p^{j^*}_{r}$ is the player that comes after $i$ in $\pi_{j^*}$, we obtain
\begin{align}
    c_i(\sigma) &\le d(i,p^{j^*}_{r}) + \sum_{k=r}^{n_{j^*}} d(p_{k}^{j^*},p_{k+1}^{j^*}) \nonumber\\
                &\le d(i,t)+d(t, p^{j^*}_{r}) + \sum_{k=r}^{n_{j^*}} d(p_{k}^{j^*},p_{k+1}^{j^*}) \label{inq_2} \\
                &\le d(i,t)+ d(t,p_{1}^{j^*})
                + \sum_{k=1}^{r-1} d(p_{k}^{j^*},p_{k+1}^{j^*})
                + \sum_{k=r}^{n_{j^*}} d(p_{k}^{j^*},p_{k+1}^{j^*})
                \label{inq_3}\\
                &= d(i,t)+ d(t,p_{1}^{j^*})
                + \sum_{k=1}^{n_{j^*}} d(p_{k}^{j^*},p_{k+1}^{j^*})
                \nonumber\\
                &\le d(i,t) +  \frac{2 \sum_{k=1}^{n}d(k,t)}{m},\label{inq_4}
\end{align}
where both Inequalities~(\ref{inq_2}) and~(\ref{inq_3}) comes from triangle inequality, and Inequality~(\ref{inq_4}) is obtained using Inequality~(\ref{inq_1}). 

For players $i$ such that ${\sigma_i = j^*}$, we have that
$$c_i(\sigma) \le \sum_{k=1}^{n_{j^*}} d(p_{k}^{j^*},p_{k+1}^{j^*}) \le \frac{2 \sum_{k=1}^{n}d(k,t)}{m}.$$ 

Hence,
\begin{align}
    \sum_{i=1}^{n} c_i(\sigma) &\le \sum_{i=1}^{n} d(i,t) +\frac{2n \sum_{k=1}^{n}d(k,t)}{m} \nonumber\\
                        &\le U(\sigma^*) + \frac{2n}{m}U(\sigma^*)\label{inqx1} \\
                        &\le \left(\frac{2n}{m}+1\right) U(\sigma^*),
\end{align}
where we use Proposition~\ref{prop_util} for Inequality~(\ref{inqx1}).

For the lower bound, we use Proposition~\ref{prop_util_PoS}, which completes the proof.
\end{proof}

\section{Conclusion and Future Work}
In this paper, we have extended the game proposed by Fotakis et al.~\cite{fotakis2017selfish} by considering it in its extensive form, which we call sequential transportation games. As a result of it, we were able to give bounds for the Sequential Price of Stability and the Sequential Price of Anarchy considering social cost functions~$D$, and~$E$, previously introduced by Fotakis et al.~\cite{fotakis2017selfish}, and a new utilitarian social cost function $U$ in simultaneous games. Another contribution is the analyses of the inefficiency of equilibria for function $U$. All bounds presented are asymptotically tights for a constant number of buses, with most of them being tight or asymptotically tight even without this restriction.

As a future direction, we also believe, as Fotakis et al.~\cite{fotakis2017selfish} do, that analyzing the game with different ways of computing the routes of the buses would be of great interest. Also, closing the gaps of the Table~\ref{tab_Results} is another interesting direction for the research.

\section*{Acknowledgments}
This research was partially supported by the S\~ao Paulo Research Foundation (FAPESP) grants 2015/11937-9, 2016/01860-1, and 2017/05223-9; and the National Council for Scientific and Technological Development (CNPq) grants 308689/2017-8, 425340/2016-3, 314366/2018-0, and 425806/2018-9. This study was financed in part by the Coordena\c{c}\~ao de Aperfei\c{c}oamento de Pessoal de N{\'i}vel Superior - Brasil (CAPES) - Finance Code 001.

\bibliography{mybibs}

\begin{thebibliography}{10}
\expandafter\ifx\csname url\endcsname\relax
  \def\url#1{\texttt{#1}}\fi
\expandafter\ifx\csname urlprefix\endcsname\relax\def\urlprefix{URL }\fi
\expandafter\ifx\csname href\endcsname\relax
  \def\href#1#2{#2} \def\path#1{#1}\fi

\bibitem{fotakis2017selfish}
D.~Fotakis, L.~Gourv{\`e}s, J.~Monnot, Selfish transportation games, in:
  International Conference on Current Trends in Theory and Practice of
  Informatics, Springer, 2017, pp. 176--187.

\bibitem{furuhata2013ridesharing}
M.~Furuhata, M.~Dessouky, F.~Ord{\'o}{\~n}ez, M.-E. Brunet, X.~Wang, S.~Koenig,
  Ridesharing: The state-of-the-art and future directions, Transportation
  Research Part B: Methodological 57 (2013) 28--46.

\bibitem{KoutsoupiasP99}
E.~Koutsoupias, C.~H. Papadimitriou, Worst-case equilibria, in: Proc. 16th
  Annual Symposium on Theoretical Aspects of Computer Science, 1999, pp.
  404--413.

\bibitem{anshelevich2008price}
E.~Anshelevich, A.~Dasgupta, J.~Kleinberg, E.~Tardos, T.~Wexler,
  T.~Roughgarden, The price of stability for network design with fair cost
  allocation, SIAM Journal on Computing 38~(4) (2008) 1602--1623.

\bibitem{leme2012curse}
R.~P. Leme, V.~Syrgkanis, {\'E}.~Tardos, The curse of simultaneity, in:
  Proceedings of the 3rd Innovations in Theoretical Computer Science
  Conference, ACM, 2012, pp. 60--67.

\bibitem{shoham2008multiagent}
Y.~Shoham, K.~Leyton-Brown, Multiagent systems: Algorithmic, game-theoretic,
  and logical foundations, Cambridge University Press, 2008.

\bibitem{de2013decentralized}
J.~de~Jong, M.~Uetz, A.~Wombacher, Decentralized throughput scheduling, in:
  International Conference on Algorithms and Complexity, Springer, 2013, pp.
  134--145.

\bibitem{de2014sequential}
J.~de~Jong, M.~Uetz, The sequential price of anarchy for atomic congestion
  games, in: International Conference on Web and Internet Economics, Springer,
  2014, pp. 429--434.

\bibitem{hassin2015sequential}
R.~Hassin, U.~Yovel, Sequential scheduling on identical machines, Operations
  Research Letters 43~(5) (2015) 530--533.

\bibitem{angelucci2013sequential}
A.~Angelucci, V.~Bil{\`o}, M.~Flammini, L.~Moscardelli, On the sequential price
  of anarchy of isolation games, in: International Computing and Combinatorics
  Conference, Springer, 2013, pp. 17--28.

\bibitem{bilo2012some}
V.~Bil{\`o}, M.~Flammini, G.~Monaco, L.~Moscardelli, Some anomalies of
  farsighted strategic behavior, in: International Workshop on Approximation
  and Online Algorithms, Springer, 2012, pp. 229--241.

\bibitem{correa2019inefficiency}
J.~Correa, J.~de~Jong, B.~De~Keijzer, M.~Uetz, The inefficiency of nash and
  subgame perfect equilibria for network routing, Mathematics of operations
  research 44~(4) (2019) 1286--1303.

\bibitem{roughgarden2009intrinsic}
T.~Roughgarden, Intrinsic robustness of the price of anarchy, in: Proceedings
  of the forty-first annual ACM symposium on Theory of computing, ACM, 2009,
  pp. 513--522.

\bibitem{hoeksma2011price}
R.~Hoeksma, M.~Uetz, The price of anarchy for minsum related machine
  scheduling, in: International Workshop on Approximation and Online
  Algorithms, Springer, 2011, pp. 261--273.

\bibitem{cole2010coordination}
R.~Cole, V.~Gkatzelis, V.~Mirrokni, Coordination mechanisms for weighted sum of
  completion times, Tech. rep., Citeseer (2010).

\bibitem{zermelo1913anwendung}
E.~Zermelo, {\"U}ber eine anwendung der mengenlehre auf die theorie des
  schachspiels, in: Proceedings of the fifth international congress of
  mathematicians, Vol.~2, II, Cambridge UP, Cambridge, 1913, pp. 501--504.

\end{thebibliography}

\end{document}